\newcommand{\bbR}{\mathbb{R}}
\newcommand{\bbC}{\mathbb{C}}
\newcommand{\bbH}{\mathbb{H}}
\newcommand{\bbCi}{\mathbb{C}_{\bm i}}
\newcommand{\bbCj}{\mathbb{C}_{\bmj}}
\newcommand{\dt}{\mathrm{d}t}
\newcommand{\bmj}{\bm j}
\newcommand{\bmi}{\bm i}
\newcommand{\bmk}{\bm k}
\newcommand{\bmmu}{\bm \mu}
\renewcommand{\epsilon}{\varepsilon}
\renewcommand{\phi}{\varphi}
\newcommand{\intinf}{\int_{-\infty}^{+\infty}}
\newcommand{\defeq}{\overset{\scriptstyle{\Delta}}{=}}
\newcommand{\atand}{\mathrm{atan2}}
\newcommand{\ip}[1]{\left\langle #1 \right\rangle}
\newcommand{\ie}{\emph{i.e. }}
\newcommand{\sign}{\mathrm{sign}}
\newcommand{\Span}[1]{\mathrm{span}\left\lbrace #1\right\rbrace}
\newcommand{\involj}[1]{{#1}^{\ast\bmj}}
\newcommand{\involmu}[1]{{#1}^{\ast\bmmu}}
\newtheorem{definition}{Definition}
\newtheorem{theorem}{Theorem}
\newtheorem{proposition}{Proposition}
\journal{Applied and Computational Harmonic Analysis}
\begin{document}

\begin{frontmatter}
	\title{Time-frequency analysis of bivariate signals}

\author{Julien Flamant}
\address{Univ. Lille, CNRS, Centrale Lille, UMR 9189 - CRIStAL - Centre de Recherche en Informatique Signal et Automatique de Lille, 59000 Lille, France}
\author{Nicolas Le Bihan}
\address{CNRS/GIPSA-Lab, 11 Rue des math\'{e}matiques, Domaine Universitaire, BP 46, 38402 Saint Martin d'H\`{e}res cedex, France}
\author{Pierre Chainais}
\address{Univ. Lille, CNRS, Centrale Lille, UMR 9189 - CRIStAL - Centre de Recherche en Informatique Signal et Automatique de Lille, 59000 Lille, France}

\begin{abstract}
Many phenomena are described by bivariate signals or bidimensional vectors in applications ranging from radar to EEG, optics and oceanography. The time-frequency analysis of bivariate signals is usually carried out by analyzing two separate quantities, e.g. rotary components. We show that an adequate quaternion Fourier transform permits to build relevant time-frequency representations of bivariate signals that naturally identify geometrical or polarization properties. First, the \emph{quaternion embedding} of bivariate signals is introduced, similar to the usual analytic signal of real signals. Then two fundamental theorems ensure that a quaternion short term Fourier transform and a quaternion continuous wavelet transform are well defined and obey desirable properties such as conservation laws and reconstruction formulas. The resulting spectrograms and scalograms provide meaningful representations of both the time-frequency and geometrical/polarization content of the signal. Moreover the numerical implementation remains simply based on the use of FFT. A toolbox is available for reproducibility. Synthetic and real-world examples illustrate the relevance and efficiency of the proposed approach.

\end{abstract}

\begin{keyword}
	Bivariate signal \sep Time-Frequency analysis \sep Quaternion Fourier Transform \sep Quaternion embedding \sep Polarization properties \sep Stokes parameters
\end{keyword}
\end{frontmatter}

\section{Introduction}

Bivariate signals are a special type of multivariate time series corresponding to vector motions on the 2D plane or equivalently in ${\mathbb R}^2$. They are specific because their time samples encode the time evolution of vector valued quantities (motion or wavefield direction, velocity, etc). Non-stationary bivariate signals appear in many applications such as oceanography \cite{gonella1972rotary,mooers1973technique}, optics \cite{erkmen2006optical}, radar \cite{Ahrabian2013}, geophysics \cite{samson1983pure} or EEG analysis \cite{sakkalis2011review} to name but a few.

In most of these scientific fields, the physical phenomena (electromagnetic waves, currents, elastic waves, etc) are described by various types of quantities which depend on the frequency over different ranges of frequencies. As frequency components evolve with time, time-frequency representations are necessary to accurately describe the evolution of the recorded signal. 

Over the last 20 years, several works have proposed to develop time-frequency representations of bivariate signals. Most authors have made use of {\em augmented} representations \cite{Schreier2003, schreier2010statistical}, {\em i.e.} the real bivariate signal $[x(t), y(t)] \in \bbR^2$ or the complex version $[f(t), \overline{f(t)}] \in \bbC^2$ where $f(t)=x(t) + \bmi y(t)$. The latter has been used to characterize second order statistics of complex valued processes \cite{Picinbono1997b} as well as higher order statistics \cite{POA96I,POA96II}. In both stationary and non-stationary cases, this approach leads to the extraction of kinematics and polarization properties of the signal \cite{olhede2003polarization,Schreier2008,walden2013rotary,Serroukh2000, Rubin-Delanchy2008}. In the deterministic setting, several bivariate extensions of the Empirical Mode Decomposition (EMD) have been proposed \cite{Tanaka2007,Rilling2007} to decompose bivariate signals into ``simple'' components\footnote{A component is here understood as a possibly time varying monochromatic signal.}. Bivariate instantaneous moments were introduced recently in \cite{lilly2010bivariate} and further extended to the multivariate case in \cite{Lilly2012}. The bivariate instantaneous moment method permits to extract kinematic parameters from a bivariate signal for example. However, as noted by the authors in \cite{lilly2010bivariate}, this method is not applicable when the signal is multicomponent, {\em i.e.} consists of several monochromatic signals. 

Existing methods all share the use of the standard complex Fourier transform (or complex-valued Cram\`er representation in the random case). In this article, we demonstrate that a different \emph{deterministic} approach towards bivariate time-frequency analysis is possible thanks to an alternate definition of the Fourier transform borrowed from geometric algebra. 

In the univariate case, \ie when the signal is real-valued, it is well-known that its (complex) Fourier transform obeys Hermitian symmetry. This feature is fundamental for physical interpretation of the Fourier transform. In signal analysis it leads to the definition of the analytic signal \cite{Gabor1946,Ville1948}, which is the first building block towards more sophisticated time-frequency analysis tools. The analytic signal carries exactly the same information as the original real-valued signal, but with zeros in the negative frequency spectrum. Negative frequencies are redundant since they can be obtained by Hermitian symmmetry from the positive frequencies spectrum. The analytic signal permits to define the \emph{instantaneous amplitude} and the \emph{instantaneous phase} of a real signal: it can be seen as the very first time-frequency analysis tool, at least for amplitude modulation signals.

In the bivariate case or when the signal is complex-valued, its (complex) Fourier transform lacks Hermitian symmetry. Both positive and negative frequencies have to be considered. This observation has motivated a type of analysis often called the \emph{rotary spectrum}. In the simple case where the signal is stationary/periodic (depending from the setting, deterministic or random), the kinematic and polarization properties of the signal at frequency $\omega$ can be extracted from both spectrum values at $\omega$ and $-\omega$ \cite{Rubin-Delanchy2008,Schreier2008}. This approach involves a systematic parallel processing scheme of positive and negative frequencies. 
Physicists would prefer a tool that directly yields a representation of the bivariate signal content as a function of positive frequencies only. This is the desirable condition of a direct physical interpretation of the proposed representation, as in a spectrogram for instance.

In this work, we handle bivariate signals as complex-valued signals. We show that they can be efficiently processed using the Quaternion Fourier Transform (QFT), an alternate definition of the Fourier transform. As a first benefit, it restores a special kind of Hermitian symmetry in the quaternion spectral domain. The positive frequencies of the quaternion spectrum carry all the information about the bivariate signal. This permits the definition of  the \emph{quaternion embedding of a complex signal}, the bivariate counterpart of the analytic signal. This quaternion-valued signal is uniquely related to the original complex signal.

An interesting feature of quaternion-valued signals is that geometric information is easily accessible. Indeed, polar forms of quaternions permit the definition of {\em phases} to be interpreted geometrically. Much alike the phase and amplitude defined from an analytic signal, we will introduce the same type of parameters to describe complex-valued (\ie bivariate) signals thanks to their \emph{quaternion embedding}. It will appear that this description receives a direct geometrical interpretation. Even the usual Stokes' parameters used by physicists to describe the polarization state of electromagnetic waves will arise as natural parameters.

Furthermore, the quaternion Fourier transform (QFT) permits to define time-frequency representations for multicomponent bivariate signals in a proper manner. We introduce the Quaternion Short-Term Fourier Transform (Q-STFT) and the Quaternion Continuous Wavelet Transform (Q-CWT) and provide new theorems demonstrating their ability to access geometric/polarization features of bivariate signals. The proposed approach leads to a consistent generalization of classical time-frequency representations to the case of bivariate signals. In practice, known concepts such as ridges extraction  extend nicely to the bivariate case, revealing simultaneously time-frequency and polarization features of bivariate signals at a low computational cost since only FFT are needed still.  

The paper is organized as follows. Section \ref{section:quaternionSec2} first reviews usual quaternion calculus elements and introduces the Quaternion Fourier Transform (QFT) in a general setting. Then we consider the special case of bivariate signals and study this QFT. Section \ref{section:quaternionEmbedding} introduces the quaternion embedding of a complex signal as well as a tailored polar form that recovers both  the frequency content and polarization properties. Section \ref{section:TimeFreqRepresentation} addresses some limitations of the quaternion embedding by introducing the Quaternion Short-Term Fourier Transform and the Quaternion Continuous Wavelet Transform. This section proves two fundamental theorems regarding bivariate time-frequency analysis. Section \ref{section:ExtractionInstantaneousQuantities} provides an asymptotic analysis and defines the ridges of the transforms presented in section \ref{section:TimeFreqRepresentation}. Section \ref{section:ExamplesSection6} illustrates the performances of these new tools on several examples of bivariate signals. A (non-exhaustive) summary of notations and symbols used throughout this paper is provided in Table \ref{table:notations}. Appendices gather technical proofs.

\begin{table}
\centering
\begin{tabular}{ccl}

	\cmidrule[1pt]{2-3}
	&\bf\textsf{Symbol} & \bf\textsf{Description}\\
	\cmidrule{2-3}
	\multirow{5}{*}{\rotatebox[origin=c]{90}{Operators}} &$\mathcal{F}\left\lbrace f \right\rbrace$, $\hat{f}$&Fourier transform of $f$\\
	& $\mathcal{H}\left\lbrace f \right\rbrace$& Hilbert transform of $f$\\
	&$f\ast g$& Convolution product between $f$ and $g$ (non commutative in general)\\
	&$Sf$ &Quaternion Short-Term Fourier transform of $f$\\
	&$Wf$& Quaternion Continuous Wavelet Transform of $f$\\
	\cmidrule{2-3}
	\multirow{4}{*}{\rotatebox[origin=c]{90}{Spaces}}&$\bbC_{\bmmu}$& subfield $\bbR \oplus \bmmu \bbR$ isomorphic to $\bbC$, where $\bmmu$ is a pure unit quaternion.\\
	&$L^p(\bbR, \mathbb{X})$& $L^p$-space of functions taking values in $\mathbb{X}$\\
	&$H^2(\bbR, \mathbb{X})$& Hardy space of square integrable functions taking values in $\mathbb{X}$\\
	&$\mathcal{P}$& Poincar\'{e} half-plane $\left\lbrace (x, y) \in \bbR^2 \middle\vert y \geq 0\right\rbrace$.\\
	\cmidrule[1pt]{2-3}
\end{tabular}
\caption{List of symbols used throughout this work}\label{table:notations}
\end{table}

\section{The Quaternion Fourier Transform}\label{section:quaternionSec2}

\subsection{Quaternion algebra}\label{section:quaternionAlgebra}
Quaternions were first described by Sir W.R. Hamilton in 1843 \cite{hamilton1866elements}. The set of quaternions $\bbH$ is one of the four existing normed division algebras, together with real numbers $\bbR$, complex numbers $\bbC$ and octonions $\mathbb{O}$ (also known as Cayley numbers) \cite{conway2003quaternions}. Quaternions form a four dimensional noncommutative division ring over the real numbers. Any quaternion $q\in \bbH$ can be written in its Cartesian form as
\begin{equation}\label{eq:Cartesianformquaternion}
	q = a + b\bmi + c\bmj + d\bmk,
\end{equation}
where $a, b, c, d \in \bbR$ and $\bmi, \bmj, \bmk$ are roots of $-1$ satisfying
\begin{equation}
	\bmi^2 = \bmj^2 = \bmk^2 = \bmi\bmj\bmk = -1.
\end{equation}
The canonical elements $\bmi, \bmj, \bmk$, together with the identity of $\bbH$ form the quaternion canonical basis given by $\left\lbrace 1, \bmi, \bmj, \bmk\right\rbrace$. Throughout this document, we will use the notation $\mathcal{S}(q) = a \in \bbR$ to define the \emph{scalar part} of the quaternion $q$, and $\mathcal{V}(q) = q - \mathcal{S}(q) \in \Span{\bmi, \bmj, \bmk}$ to denote the \emph{vector part} of $q$. As for complex numbers, we can define the real and imaginary parts of a quaternion $q$ as
\begin{equation}
	\mathfrak{R}(q) = a, \: \mathfrak{I}_{\bmi}(q) = b, \:  \mathfrak{I}_{\bmj}(q) = c, \: \mathfrak{I}_{\bmk}(q) = d.
\end{equation}

A quaternion is called \emph{pure} if its real (or scalar) part is equal to zero, that is $a = 0$, \emph{e.g.} $\bmi, \bmj, \bmk$ are pure quaternions. The quaternion conjugate of $q$ is obtained by changing the sign of its vector part, leading to $\overline{q} = \mathcal{S}(q) - \mathcal{V}(q)$. The modulus of a quaternion $q \in \bbH$ is defined by 
 \begin{equation}
 	\vert q \vert^2 = q\overline{q} = \overline{q}q = a^2+b^2+c^2+d^2.
 \end{equation}
 When $\vert q \vert = 1$, $q$ is called an \emph{unit} quaternion. The set of unit quaternions is homeomorphic to $\mathcal{S}^3 = \left\lbrace x \in \bbR^4 \, \middle\vert \, \Vert x \Vert = 1\right\rbrace$, the 3-dimensional unit sphere in $\bbR^4$. The inverse of a non zero quaternion is defined by
 \begin{equation}
 	q^{-1} = \frac{\overline{q}}{\vert q\vert^2}.
 \end{equation}
It is important to keep in mind that quaternion multiplication is noncommutative, that is in general for $p, q \in \bbH$, one has $pq \neq qp$.

\emph{Involutions} with respect to $\bmi, \bmj, \bmk$ play an important role, and are defined as
\begin{equation}
	\overline{q}^{\bmi} = - \bmi q \bmi, \: \overline{q}^{\bmj} = - \bmj q \bmj, \: \overline{q}^{\bmk} = - \bmk q \bmk.
\end{equation}
The combination of conjugation and involution with respect to an arbitrary pure quaternion $\bmmu$ is denoted by
\begin{equation}
	\involmu{q} \defeq \overline{\left(\overline{q}\right)}^{\bmmu} = \overline{\left(\overline{q}^{\bmmu}\right)},
\end{equation}
and for instance $\involj{( a + b\bmi + c\bmj + d\bmk)} = a + b\bmi - c\bmj + d\bmk$.

The Cartesian form (\ref{eq:Cartesianformquaternion}) is not the only possible representation for a quaternion. Most of the content of this paper is indeed based on other useful representations, which better reflect the geometry of $\bbH$. First, we introduce the Cayley-Dickson form, which decomposes a quaternion $q$ into a pair of 2D numbers isomorphic to a pair of complex numbers. The most general Cayley-Dickson form of a quaternion reads
\begin{equation}\label{eq:CayleyDicksonForm}
	q = q_1 + q_2\bmmu_{\perp}, \: q_1, q_2 \in \bbC_{\bmmu},
\end{equation}
where $\bbC_{\bmmu} \equiv \bbR \oplus \bmmu \bbR$ is a complex subfield of $\bbH$ isomorphic to $\bbC$. Here $\bmmu$ and $\bmmu_\perp$ are pure unit orthogonal quaternions: $\mathcal{S}(\bmmu\bmmu_\perp) = 0$. In particular, the decomposition given by the choice of $\bmmu = \bmi$ and $\bmmu_\perp = \bmj$ in (\ref{eq:CayleyDicksonForm}) will be used extensively.

Just like the polar form of complex numbers, a quaternion Euler polar form can be defined\footnote{The polar form presented here is rather different from what is often referred to as the polar form in the quaternion literature. It usually refers to the decomposition $q = \vert q\vert \exp(\bmmu \theta)$, with $\bmmu$ a pure unit quaternion and $\theta \in \bbR$.}. This form was first introduced by B\"{u}low in \cite{bulow2001hypercomplex}. Any quaternion $q$ has a Euler decomposition which reads
\begin{equation}\label{eq:polarEulerformquaternion}
	q = \vert q\vert\exp(\bmi\theta)\exp(-\bmk \chi)\exp(\bmj \phi),
\end{equation}
where $(\theta, \phi, \chi)$ is called the \emph{phase triplet} of $q$. The phase is (\emph{almost uniquely}) defined within the interval 
\begin{equation}
	(\theta, \phi, \chi) \in [-\pi, \pi[ \times [-\pi/2, \pi/2] \times [-\pi/4, \pi/4].
\end{equation}
The term \emph{almost} refers to the two singular cases for $\chi = \pm \pi/4$, where the phase of a quaternion is not well defined. This phenomenon is known as \emph{gimbal lock}, since the three angles above are indeed the  $xzy$-Euler angles corresponding to the rotation associated to the quaternion $q/\vert q\vert$ of unit norm\footnote{Recall that the unit sphere in $\bbR^4$, $\mathcal{S}^3$ is indeed a two-fold covering of the rotation group $SO(3)$. That is every rotation matrix $\mathbf{R} \in SO(3)$ can be identified with two antipodal points on $\mathcal{S}^3$, $q$ and $-q$. Rotations can be described by three Euler angles \cite{altmann2005rotations}, corresponding to three successive rotations around the canonical axes. The $xzy$-convention is used in this work.}. Figure \ref{fig:polarFormComputation} presents the procedure to obtain the phase triplet $(\theta, \phi, \chi)$. Table \ref{table:quaternionFieldguide} summarizes important properties relevant to quaternion calculus.

\begin{table}
\centering
\begin{tabular}{cl}
	\toprule
	\bf\textsf{Property} & \bf\textsf{Description}\\
	\midrule
	$\mathcal{S}(q), \mathcal{V}(q)$ & Scalar and Vector part of a quaternion $q$\\
	$\mathfrak{R}, \mathfrak{I}_{\bmi}, \mathfrak{I}_{\bmj}, \mathfrak{I}_{\bmk}$& Real and Imaginary parts operators\\
	$\vert q \vert^2 = a^2 + b^2 + c^2 + d^2$& Squared modulus of $q$\\
	$q^{-1} = \overline{q}/\vert q\vert^2$ & Inverse ($q \neq 0$)\\
	$\overline{pq} = \overline{q}\:\overline{p}$& Conjugate of a product\\
	$\overline{q}^{\bmmu} = -\bmmu q \bmmu$ & Involution by unit quaternion $\bmmu$, $\bmmu^2 = -1$\\
	$q = q_1 + q_2\bmj$& Cayley-Dickson form of $q$\\
	$e^{\bmmu \theta} = \cos\theta + \bmmu \sin \theta$&Exponential of a pure unit quaternion $\bmmu$, $\theta \in \bbR$\\
	$q = \vert q\vert e^{\bmi \theta}e^{-\bmk \chi}e^{\bmj \phi}$& Euler polar form of $q$\\
	\bottomrule
\end{tabular}
\caption{Basic guide to quaternion calculus. Here $q = a + b\bmi + c\bmj + d\bmk$ is an arbitrary quaternion, $\bmmu$ is a pure unit quaternion $\bmmu^2 = -1$. The quantities $\theta, \chi, \phi$ are real-valued.}\label{table:quaternionFieldguide}
\end{table}
\begin{figure}
\centering
\begin{tikzpicture}[node distance = 3cm, auto]

\tikzstyle{block} = [rectangle, draw, top color=white, bottom color=black!10, 
    text width=12em, minimum height=3em,, minimum width=3em, font=\footnotesize\sffamily,]
 \tikzstyle{blocklarge} = [rectangle, draw, top color=white, bottom color=black!10, 
    text width=14em, minimum height=3em,, minimum width=3em, font=\footnotesize\sffamily,]
\tikzstyle{textblock} = [rectangle, fill=white, font=\footnotesize\sffamily, text width=8em]
\tikzstyle{textblock2} = [rectangle, fill=white, font=\footnotesize\sffamily]

\tikzstyle{line} = [draw, -latex', line width=.1em]

	\node[textblock2]  at (0, 0) (init) {$\displaystyle q \in \bbH$};
	\node [block] at (3.5, 0) (normalize) {Normalize \\$\displaystyle \tilde{q} = q/\vert q\vert = a + b\bmi + c\bmj + d\bmk$};
	\node [block] at (9, 0) (chicomp) {Compute ellipticity angle\\$\displaystyle \chi = \arcsin[2(bc-ad)]/2$};
	\node [blocklarge] at (12, -2.5) (chi) {If $\chi = \pm \pi/4$\vspace{.5em}\\$\quad \displaystyle \tilde{\theta} = 0$\\$\quad \displaystyle \phi = \atand\left(\frac{2(bd-ac)}{a^2-b^2-c^2+d^2}\right)$\\
	Else\vspace{.5em}\\$\quad \displaystyle \tilde{\theta} = \atand\left(\frac{2(ab+cd)}{a^2-b^2+c^2-d^2}\right)$\\$\quad \displaystyle \phi = \atand\left(\frac{2(bd+ac)}{a^2+b^2-c^2-d^2}\right)$};
	\node [block] at (6, -2.5) (prod) {If $e^{\bmi\theta}e^{-\bmk\chi}e^{\bmj\phi} = -\tilde{q}$ and $\tilde{\theta} \geq 0$\vspace{.5em}\\ $\quad \displaystyle \theta = \tilde{\theta} - \pi$ \vspace{.5em}\\ If $e^{\bmi\theta}e^{-\bmk\chi}e^{\bmj\phi} = -\tilde{q}$ and $\tilde{\theta} < 0$ \vspace{.5em}\\$\quad \displaystyle \theta = \tilde{\theta} + \pi$ \vspace{.5em} \\
	Else $\displaystyle \theta = \tilde{\theta} $};

	\node[textblock] at (1, -2.5) (end) {$q = \vert q \vert e^{\bmi\theta}e^{-\bmk\chi}e^{\bmj\phi}$\vspace{0.5em}\\$\displaystyle \vert q\vert \in \bbR^+$\vspace{0.5em}\\ $\theta \in [-\pi, \pi[$ \vspace{0.5em}\\ $\phi \in [-\pi/2, \pi/2]$\vspace{0.5em}\\ $\chi \in [-\pi/4, \pi/4]$};
	\path[line] (init) -- (normalize);
	\draw[line] (normalize) -- (chicomp);
	\path[line] (chicomp.east) -| (chi.north);
	\path[line] (chi.west) -- (prod.east);
	\path[line] (prod.west) -- (end.east);

\end{tikzpicture}
\caption{Computation of the Euler polar form parameters from a quaternion $q$. Adapted from \cite{bulow2001hypercomplex}.}\label{fig:polarFormComputation}
\end{figure}
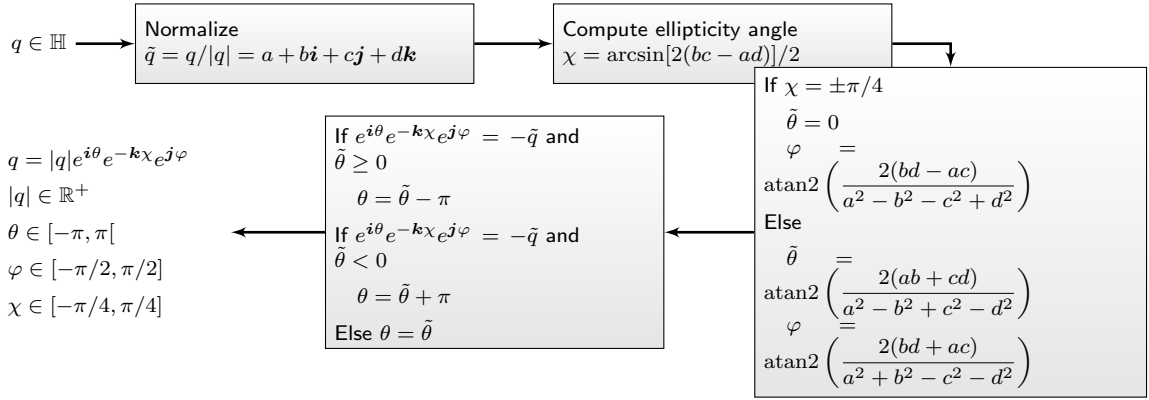

\subsection{Quaternion Fourier Transform}
Most of the literature about \emph{Quaternion Fourier transforms} concerns a particular type of Fourier transforms, namely for functions $f:\bbR^2 \rightarrow \bbH$. For a review on the subject, we refer the reader to \cite{ell2013quaternion} and references therein. This particular set of transforms is for instance of particular interest in color image processing \cite{ell2014quaternion}. In constrast, we review here the one-dimensional Quaternion Fourier transform, for functions $f:\bbR \rightarrow \bbH$.

We define the Quaternion Fourier Transform (QFT) of \emph{axis} $\bmmu$ of a function $f: \bbR \rightarrow \bbH$ by
\begin{equation}\label{eq:definitionQFTaxismu}
	\mathcal{F}\left\lbrace f \right\rbrace =  \intinf f(t)\exp(-\bmmu\omega t)\dt \defeq \hat{f}(\omega).
\end{equation}
In comparison with the standard complex Fourier transform, there are several differences. The position of the exponential function is crucial due to the noncommutative product in $\bbH$, and in our work the exponential function will always be placed on the \emph{right}. Moreover, the axis $\bmmu$ is a free parameter; it is only restricted to be a \emph{pure unit quaternion}. Details on the choice of $\bmmu$ are given in section \ref{section:2choiceofaxis}.

The existence and invertibility of the QFT have been studied by Jamison \cite{jamison1970extension} in his PhD dissertation that dates back to 1970, for functions in $L^1(\bbR, \bbH)$ and $L^2(\bbR, \bbH)$. Although in his definition the exponential function is placed on the left, the proofs are straightforwardly adapted to our convention. We recall the fundamental results only, and refer to Jamison's manuscript for an extensive discussion.

As for the standard complex Fourier Transform, the existency and invertibily of the QFT are first proven for functions in $L^1(\bbR, \bbH)$. The inverse QFT reads
\begin{equation}
	f(t) = \frac{1}{2\pi}\intinf \hat{f}(\omega)\exp(\bmmu\omega t)\mathrm{d}\omega.
\end{equation}
Using a density argument (see \cite[Chapter 2]{mallat2008wavelet} for instance) adapted to the present context, the extension of the QFT to functions in $L^2(\bbR, \bbH)$ can be worked out. The steps are identical to what is known in the complex case and will not be reproduced here.

One can show that $L^2(\bbR, \bbH)$ is a (right) Hilbert space over $\bbH$ \cite{jamison1970extension, teichmuller1936operatoren}. This will be of great interest, since from now on we will write inner products between functions $f, g \in L^2(\bbR, \bbH)$
\begin{equation}
	\ip{f, g} \defeq \intinf f(t)\overline{g(t)}\dt,
\end{equation}
such that the induced norm reads
\begin{equation}
	\Vert f \Vert^2 \defeq \intinf \vert f(t)\vert^2\dt.
\end{equation}
We now prove a fundamental theorem, which extends well-known results to the case of $\bbH$-valued functions. 
\begin{theorem}[Parseval and Plancherel formulas]\label{theorem:ParsevalPlancherel}
Let $f,g \in L^2(\bbR, \bbH)$. Then the following holds
\begin{align}
 \text{(Parseval)}\quad &
	\begin{cases}
	\displaystyle\intinf f(t)\overline{g(t)}\mathrm{d}t &= \displaystyle\frac{1}{2\pi}\intinf \hat{f}(\omega)\overline{\hat{g}(\omega)}\mathrm{d}\omega\\
	\displaystyle\intinf f(t)\involmu{g(t)}\mathrm{d}t &= \displaystyle\frac{1}{2\pi}\intinf \hat{f}(\omega)\involmu{\hat{g}(\omega)}\mathrm{d}\omega
	\end{cases},\\
\text{(Plancherel)} \quad&
\begin{cases}
	\displaystyle\intinf \vert f(t)\vert^2\mathrm{d}t &=\displaystyle\frac{1}{2\pi} \intinf \vert \hat{f} (\omega)\vert^2\mathrm{d}\omega\\
	\displaystyle\intinf f(t)\involmu{f(t)}\mathrm{d}t &= \displaystyle\frac{1}{2\pi}\intinf \hat{f}(\omega)\involmu{\hat{f}(\omega)}\mathrm{d}\omega
	\end{cases}.\label{eq:PlancherelQFT}
\end{align}
\end{theorem}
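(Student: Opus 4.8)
The plan is to prove the two Parseval identities first and then recover the two Plancherel formulas of \eqref{eq:PlancherelQFT} as the special case $g = f$, noting that $\intinf f(t)\overline{f(t)}\dt = \intinf \vert f(t)\vert^2\dt$ and $\intinf \hat{f}(\omega)\overline{\hat{f}(\omega)}\domega = \intinf \vert \hat{f}(\omega)\vert^2\domega$ since $q\overline{q} = \vert q\vert^2$. For the Parseval identities I would mimic the classical complex proof by substituting the inverse QFT for $g$, but paying close attention to the noncommutative ordering of the factors.

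Concretely, for the first identity I would write $g(t)$ through the inversion formula and conjugate it. Using $\overline{pq} = \overline{q}\,\overline{p}$ together with $\overline{\exp(\bmmu\omega t)} = \exp(-\bmmu\omega t)$ (valid because $\bmmu$ is a pure unit quaternion), one obtains
\[
\overline{g(t)} = \frac{1}{2\pi}\intinf \exp(-\bmmu\omega t)\,\overline{\hat{g}(\omega)}\,\domega .
\]
Substituting this into $\intinf f(t)\overline{g(t)}\dt$ and interchanging the $t$- and $\omega$-integrations, the factor $\overline{\hat{g}(\omega)}$ --- independent of $t$ and sitting on the \emph{right} --- can be pulled outside the inner integral, which is then exactly $\hat{f}(\omega) = \intinf f(t)\exp(-\bmmu\omega t)\dt$, yielding the right-hand side. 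The convention of placing the exponential on the right in the definition of the QFT is precisely what makes all factors line up here.

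The second Parseval identity follows from the identical computation with $\overline{(\cdot)}$ replaced by $\involmu{(\cdot)}$, and the only new algebraic input is a pair of facts about this involution. First, $\involmu{(\cdot)}$ is an \emph{anti}-automorphism: it is the composite of the conjugation (an anti-automorphism) with the $\bmmu$-involution $q \mapsto -\bmmu q\bmmu$, and the latter is an automorphism since $(-\bmmu p\bmmu)(-\bmmu q\bmmu) = -\bmmu pq\bmmu$; hence $\involmu{(pq)} = \involmu{q}\,\involmu{p}$. Second, $\involmu{\exp(\bmmu\omega t)} = \exp(-\bmmu\omega t)$, because $\exp(\bmmu\omega t) \in \bbC_{\bmmu}$ and $\involmu{(\cdot)}$ acts on $\bbC_{\bmmu}$ exactly like conjugation. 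With these two facts the derivation is word-for-word the same as the first case and places $\involmu{\hat{g}(\omega)}$ on the right of the $\omega$-integral.

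The genuine obstacle is analytic rather than algebraic: justifying the interchange of the two integrations and the use of the pointwise inversion formula. I would first establish both identities on a dense subspace of $L^2(\bbR,\bbH)$ on which every integrand is absolutely integrable and inversion holds pointwise --- for instance $L^1(\bbR,\bbH)\cap L^2(\bbR,\bbH)$, or Schwartz-class functions --- so that Fubini applies componentwise in the canonical basis $\left\lbrace 1, \bmi, \bmj, \bmk\right\rbrace$. I would then extend to all of $L^2(\bbR,\bbH)$ by density, using that the QFT extends to a bounded map on $L^2(\bbR,\bbH)$ (as recalled just above): each side of each identity is continuous in $(f,g)$ for the $L^2$-norms, so agreement on a dense subset forces agreement everywhere. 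Setting $g = f$ in the two Parseval identities then gives the two Plancherel formulas, which completes the argument.
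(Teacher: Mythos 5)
Your proof is correct and takes essentially the same route as the paper's: both rest on the inversion formula, the anti-automorphism identities $\overline{pq}=\overline{q}\,\overline{p}$ and $\involmu{(pq)}=\involmu{q}\,\involmu{p}$, the fact that $\involmu{(\cdot)}$ acts as ordinary conjugation on $\bbC_{\bmmu}$ (so $\involmu{\exp(\bmmu\omega t)}=\exp(-\bmmu\omega t)$), an interchange of the two integrations, and the specialization $g=f$ to get Plancherel --- the only cosmetic difference being that you expand $g$ via inversion where the paper expands $f$. Your closing density argument supplies the Fubini justification the paper leaves implicit; there, prefer your Schwartz-class option, since $g\in L^1(\bbR,\bbH)\cap L^2(\bbR,\bbH)$ alone does not guarantee $\hat{g}\in L^1$, which pointwise inversion requires.
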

\begin{proof}
Since Plancherel formulas can be obtained directly with $f=g$ in Parseval formulas, we only give a proof for Parseval's formulas. 
Let $f,g \in L^2(\bbR, \bbH)$. We have
\begin{align}
	\intinf f(t)\involmu{g(t)}\mathrm{d}t &= \frac{1}{2\pi}\int\left(\int \hat{f}(\omega)e^{\bm \mu \omega t}\mathrm{d}\omega\right)\involmu{g(t)}\mathrm{d}t\\
	{(\forall a,b \in \mathbb{H}, \involmu{(ab)} = \involmu{b}\involmu{a})} &= \frac{1}{2\pi}\iint \hat{f}(\omega) \involmu{\left(g(t)e^{-\bm \mu \omega t}\right)}\mathrm{d}t\mathrm{d}\omega \\
	&= \frac{1}{2\pi}\int \hat{f}(\omega) \involmu{\left(\int g(t)e^{-\bm \mu \omega t}\mathrm{d}t\right)} \mathrm{d}\omega \\
	&= \frac{1}{2\pi}\int \hat{f}(\omega) \involmu{\hat{g}(\omega)} \mathrm{d}\omega.
\end{align}
The other equality is proven analogously. 
\end{proof}
Note that this theorem shows two things. It shows that the QFT is an isometry of $L^2(\bbR, \bbH)$. It also shows that another quantity of geometrical nature is preserved by the QFT, the integral $\intinf f(t)\involmu{f(t)}\mathrm{d}t$. This quantity will appear naturally later on in the time-frequency analysis of bivariate signals.

\subsection{General properties}
 We develop some properties related to the Quaternion Fourier Transform (QFT) of axis $\bm\mu$, for arbitrary functions in $L^2(\bbR, \bbH)$. For brevity, most demontrations are omitted, and we refer the reader to \cite{ell2014quaternion} for completeness.

  \paragraph{Linearity} First, let us consider $f, g \in L^2(\bbR, \bbH)$, and denote by $\hat{f}$ and $\hat{g}$ their Fourier transforms. It is straightforward to note that for all $\alpha, \beta \in \bbH$, the QFT of $\alpha f + \beta g$ is $\alpha \hat{f} + \beta \hat{g}$. Therefore the QFT is left-linear. 

  \paragraph{Scaling} Let $\alpha \in \bbR^*$. The QFT of the function $f(t/\alpha)$ is $\vert \alpha \vert \hat{f}(\alpha \omega)$, as can be checked by direct calculation.

  \paragraph{Shifting} Let $\tau \in \bbR$, the QFT of $f(t-\tau)$ is $\hat{f}(\omega)e^{-\bm\mu \omega \tau}$. Let $\omega_0 \in \bbR$, the QFT of $f(t)e^{\bm\mu \omega_0 t}$ is $\hat{f}(\omega-\omega_0)$.
 
  \paragraph{Derivatives} The QFT of the n-th derivative of $f$, $f^{(n)}$ is given by $\hat{f}(\omega)(\bmmu \omega)^n$. Note that multiplication by $(\bmmu \omega)^n$ is from the right, as the exponential kernel was placed on the right in the QFT definition (\ref{eq:definitionQFTaxismu}).
  \paragraph{Invariant subspace} An important feature of the QFT of axis $\bmmu$ defined in (\ref{eq:definitionQFTaxismu}) is that the subspace $L^2(\bbR, \bbC_{\bmmu}) \subset L^2(\bbR, \bbH)$ is an invariant subspace of the QFT of axis $\bmmu$, that is $\mathcal{F}\lbrace{L^2(\bbR, \bbC_{\bmmu})}\rbrace = L^2(\bbR, \bbC_{\bmmu})$. This shows that the restriction of the QFT of axis $\bmmu$ to $L^2(\bbR, \bbC_{\bmmu})$ defines a transform isomorphic to the well-known complex Fourier transform.

  \paragraph{Convolution and product}
  Convolution is one of the cornerstone of signal processing. The following proposition gives the expression for the QFT transform of the convolution product under some conditions. 

  \begin{proposition}[Convolution]\label{proposition:Convolution}
  Let $f \in L^2(\bbR, \bbH)$ and $g \in L^2(\bbR, \bbC_{\bm\mu})$ with respective QFT of axis $\bmmu$ denoted by $\hat{f}$ and $\hat{g}$. Then the QFT of axis $\bmmu$ of the convolution product, $f\ast g$, is given by
  \begin{equation}
    \mathcal{F}\left \lbrace f\ast g \right \rbrace (\omega) = \hat{f}(\omega)\hat{g}(\omega).
  \end{equation}
  Note that in this case the convolution product is not commutative, \ie $f\ast g \neq g \ast f$, as $f$ and $g$ are now quaternion-valued functions.

  \end{proposition}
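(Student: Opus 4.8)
The plan is to substitute the definitions of the QFT and of the convolution product, reduce the resulting double integral to a product of two single integrals, and isolate the single point where the hypothesis $g \in L^2(\bbR, \bbC_{\bmmu})$ is needed to overcome noncommutativity. First I would write, using $(f\ast g)(t) = \intinf f(\tau)g(t-\tau)\,\mathrm{d}\tau$,
\begin{equation}
\mathcal{F}\left\lbrace f\ast g \right\rbrace(\omega) = \intinf\left( \intinf f(\tau)g(t-\tau)\,\mathrm{d}\tau \right)\exp(-\bmmu\omega t)\dt,
\end{equation}
and then perform the change of variable $u = t-\tau$ together with a Fubini interchange of the order of integration, yielding
\begin{equation}
\mathcal{F}\left\lbrace f\ast g \right\rbrace(\omega) = \iint f(\tau)\,g(u)\exp(-\bmmu\omega(u+\tau))\,\mathrm{d}u\,\mathrm{d}\tau.
\end{equation}

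Since $\bmmu\omega u$ and $\bmmu\omega\tau$ are both real multiples of the single pure unit quaternion $\bmmu$, they commute and the exponential splits exactly as in $\bbC$, giving the integrand $f(\tau)\,g(u)\exp(-\bmmu\omega u)\exp(-\bmmu\omega\tau)$. The key step, and the only place where noncommutativity bites, is to separate the two integration variables so as to recognize $\hat{f}(\omega)\hat{g}(\omega)$. Here the assumption $g(u)\in\bbC_{\bmmu}$ is essential: both $g(u)\exp(-\bmmu\omega u)$ and $\exp(-\bmmu\omega\tau)$ then lie in the commutative subfield $\bbC_{\bmmu}$, so they commute and I may move $\exp(-\bmmu\omega\tau)$ to the left,
\begin{equation}
f(\tau)\,g(u)\exp(-\bmmu\omega u)\exp(-\bmmu\omega\tau) = f(\tau)\exp(-\bmmu\omega\tau)\,g(u)\exp(-\bmmu\omega u).
\end{equation}
The variables are now cleanly separated on opposite sides, the double integral factorizes, and one reads off
\begin{equation}
\left( \intinf f(\tau)\exp(-\bmmu\omega\tau)\,\mathrm{d}\tau \right)\left( \intinf g(u)\exp(-\bmmu\omega u)\,\mathrm{d}u \right) = \hat{f}(\omega)\hat{g}(\omega).
\end{equation}
Had $g$ been a general $\bbH$-valued function, the factor $g(u)\exp(-\bmmu\omega u)$ would not commute with $\exp(-\bmmu\omega\tau)$, no such factorization would hold, and the asymmetry between the left argument $f$ and the right argument $g$ is exactly what makes $f\ast g \neq g\ast f$.

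The main obstacle I anticipate is not the algebra, which is transparent once $g$ is $\bbC_{\bmmu}$-valued, but the analytic justification of the interchange: for $f,g\in L^2(\bbR,\bbH)$ the convolution is continuous and bounded but need not be integrable, so both the Fubini step and the very existence of $\mathcal{F}\left\lbrace f\ast g \right\rbrace$ as an ordinary integral require care. I would handle this in the usual way, namely establish the identity first on the dense class $f,g\in L^1\cap L^2$, where $\iint \vert f(\tau)\vert\,\vert g(u)\vert\,\mathrm{d}u\,\mathrm{d}\tau < \infty$ makes Fubini immediate, and then extend by a density argument using the isometry property of the QFT furnished by Theorem \ref{theorem:ParsevalPlancherel}, precisely the continuity/density scheme already invoked to extend the QFT from $L^1$ to $L^2$.
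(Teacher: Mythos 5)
Your proof is correct and takes essentially the same route as the paper's: Fubini/interchange of integration plus the single key observation that $\bbC_{\bmmu}$-valued factors commute with the exponential kernel --- the paper evaluates the inner integral first (obtaining $\hat{g}(\omega)\exp(-\bmmu\omega u)$ via the shift property) and then commutes $\hat{g}(\omega)\in\bbC_{\bmmu}$ past the exponential, while you perform the same commutation at the integrand level before integrating, which is the identical idea in a slightly different order. Your closing remark on justifying the Fubini step and the existence of the integral for $f,g\in L^2$ via the dense class $L^1\cap L^2$ and the isometry of the QFT is a point of rigor the paper's formal computation silently omits.
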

  \begin{proof}
A direct calculation gives
  \begin{align}
  	 \mathcal{F}\left \lbrace f\ast g \right \rbrace (\omega) &= \int\left( \int f(u)g(t-u)\mathrm{d}u\right)\exp(-\bm\mu \omega t)\mathrm{d}t\\
  	 &= \int f(u)\hat{g}(\omega)\exp(-\bm\mu\omega u)\mathrm{d}u\\
  	 &= \hat{f}(\omega)\hat{g}(\omega)
  \end{align}
  where we have used the fact that $\hat{g}(\omega)$ and $\exp(-\bm\mu\omega u)$ commute, since $\bbC_{\bm\mu}$ is an invariant subspace of the QFT of axis $\bm \mu$.
  \end{proof}

  \begin{proposition}[Product]\label{proposition:Product}
  Let $f \in L^2(\bbR, \bbH)$ and $g \in L^2(\bbR, \bbC_{\bm\mu})$ with respective QFT of axis $\bmmu$ denoted by $\hat{f}$ and $\hat{g}$. Then the QFT of axis $\bmmu$ of their product is given by
  \begin{equation}
    \mathcal{F}\left \lbrace fg \right \rbrace (\omega) =  \frac{1}{2\pi}(\hat{f}\ast\hat{g})(\omega).
  \end{equation}
  \end{proposition}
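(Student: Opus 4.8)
The plan is to mirror the proof of Proposition~\ref{proposition:Convolution} but in the reverse direction: replace $g$ by its inverse QFT and recognize the resulting double integral as a convolution in the spectral domain. First I would start from the definition, writing $\mathcal{F}\{fg\}(\omega) = \intinf f(t)g(t)\exp(-\bmmu\omega t)\dt$, and substitute the inverse QFT $g(t) = \frac{1}{2\pi}\intinf \hat{g}(\xi)\exp(\bmmu\xi t)\,\mathrm{d}\xi$. After interchanging the order of integration, the integrand becomes $f(t)\hat{g}(\xi)\exp(\bmmu\xi t)\exp(-\bmmu\omega t)$, and I would combine the two exponentials into $\exp(-\bmmu(\omega-\xi)t)$, which is licit since both exponentials lie in $\bbC_{\bmmu}$ and hence commute.

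The decisive step exploits the hypothesis $g \in L^2(\bbR, \bbC_{\bmmu})$. By the invariant subspace property of the QFT of axis $\bmmu$, the spectrum $\hat{g}(\xi)$ also takes values in $\bbC_{\bmmu}$, and therefore commutes with every exponential of axis $\bmmu$. This lets me move $\hat{g}(\xi)$ past the exponential $\exp(-\bmmu(\omega-\xi)t)$ without any sign or ordering error, so that $f(t)$ sits directly against the kernel. The inner $t$-integral then reads $\intinf f(t)\exp(-\bmmu(\omega-\xi)t)\dt = \hat{f}(\omega-\xi)$ by the definition of the QFT, and what remains is $\frac{1}{2\pi}\intinf \hat{f}(\omega-\xi)\hat{g}(\xi)\,\mathrm{d}\xi = \frac{1}{2\pi}(\hat{f}\ast\hat{g})(\omega)$, with the ordering of the factors matching the noncommutative convolution of Proposition~\ref{proposition:Convolution}.

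The main obstacle is not algebraic but the justification of the manipulations in the noncommutative $L^2$ setting. Quaternion noncommutativity forbids the casual rearrangements available in the complex case, so each commutation must be traced back to membership in $\bbC_{\bmmu}$; the only factors I am entitled to swap are $\hat{g}(\xi)$ and the $\bmmu$-axis exponentials, and keeping $f(t)$ consistently on the left is what forces the convolution to inherit the order $\hat{f}\ast\hat{g}$ rather than $\hat{g}\ast\hat{f}$. Equally, the interchange of integrals is only immediate for $f,g$ in a convenient dense class (say $L^1\cap L^2$ with integrable spectra, or the Schwartz class), where absolute integrability legitimizes Fubini; I would establish the identity there and then extend it to all of $L^2(\bbR, \bbH)$ and $L^2(\bbR, \bbC_{\bmmu})$ by the same density and continuity argument already invoked for the $L^2$ extension of the QFT itself.
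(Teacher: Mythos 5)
Your proof is correct and is precisely what the paper intends by its one-line proof (``Similar to the proof of the convolution property''): a direct computation using Fubini and the key fact that $\hat{g}$, being $\bbC_{\bmmu}$-valued by the invariant subspace property, commutes with the $\bmmu$-axis exponentials, exactly as in Proposition~\ref{proposition:Convolution}. Your care with the factor ordering (keeping $f$ on the left so the result is $\hat{f}\ast\hat{g}$ and not $\hat{g}\ast\hat{f}$) and the density argument for justifying the interchange of integrals are both sound and consistent with the paper's framework.
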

  \begin{proof}
Similar to the proof of the convolution property.
  \end{proof}
  
  \paragraph{Uncertainty principle}
  Of great importance is the uncertainty principle, also known as Gabor-Heisenberg uncertainty principle. First, considering a function $f \in L^2(\bbR, \bbH)$ we define the temporal mean $u$ like
	\begin{equation}
		u = \frac{1}{\Vert f \Vert^2}\intinf t \vert f(t)\vert^2\mathrm{d}t,
	\end{equation}
	and the mean frequency $\xi$ as
	\begin{equation}
		\xi = \frac{1}{2\pi \Vert f \Vert^2}\intinf \omega \vert \hat{f}(\omega)\vert^2\mathrm{d}\omega.
	\end{equation}
	Spreads around these mean values are defined like
	\begin{equation}
		\sigma_t^2 = \frac{1}{\Vert f \Vert^2}\intinf (t-u)^2 \vert f(t)\vert^2\mathrm{d}t,
	\end{equation}
	\begin{equation}
		\sigma_\omega^2 = \frac{1}{2\pi\Vert f \Vert^2}\intinf (\omega-\xi)^2 \vert \hat{f}(\omega)\vert^2\mathrm{d}\omega.
	\end{equation}

	\begin{theorem}[Gabor-Heisenberg uncertainty principle] Given a function $f \in L^2(\bbR, \bbH)$ with QFT $\hat{f}$ and time (resp. frequency) spread $\sigma^2_t$ (resp. $\sigma^2_\omega$), then the following holds:
	\begin{equation}
		\sigma_t^2\sigma_\omega^2 \geq \frac{1}{4}.
	\end{equation}
	\end{theorem}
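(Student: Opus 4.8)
The plan is to transpose the classical proof of the Heisenberg inequality to the noncommutative setting of $\bbH$, the only genuinely new care being the bookkeeping of the order of quaternion products and a systematic use of scalar parts. First I would normalize, assuming $\Vert f\Vert = 1$ without loss of generality, so that the claim reduces to $\sigma_t^2\sigma_\omega^2 \geq 1/4$. Then I would reduce to a centered function with $u = \xi = 0$: the shifting property shows that replacing $f(t)$ by $f(t-\tau)$ only multiplies $\hat f$ by the unit-modulus factor $e^{-\bmmu\omega\tau}$, leaving $\sigma_\omega$ unchanged while translating $u$; similarly $f(t)\mapsto f(t)e^{\bmmu\omega_0 t}$ shifts the spectrum to $\hat f(\omega - \omega_0)$ and, since $\vert f(t)e^{\bmmu\omega_0 t}\vert = \vert f(t)\vert$, leaves $\sigma_t$ unchanged while translating $\xi$. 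Hence I may assume $u = \xi = 0$, so that $\sigma_t^2 = \intinf t^2\vert f(t)\vert^2\dt$ and $\sigma_\omega^2 = \frac{1}{2\pi}\intinf \omega^2\vert\hat f(\omega)\vert^2\domega$.

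Next I would convert the frequency spread into a time-domain quantity. By the derivative property, $\widehat{f'}(\omega) = \hat f(\omega)(\bmmu\omega)$, and since the quaternion modulus is multiplicative with $\vert\bmmu\omega\vert = \vert\omega\vert$, we get $\vert\widehat{f'}(\omega)\vert^2 = \omega^2\vert\hat f(\omega)\vert^2$. The Plancherel formula of Theorem~\ref{theorem:ParsevalPlancherel} then gives $\sigma_\omega^2 = \intinf\vert f'(t)\vert^2\dt = \Vert f'\Vert^2$, so that
\begin{equation}
\sigma_t^2\sigma_\omega^2 = \Vert tf\Vert^2\,\Vert f'\Vert^2.
\end{equation}

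The heart of the argument is then a Cauchy--Schwarz estimate in the right Hilbert space $L^2(\bbR,\bbH)$ applied to $tf$ and $f'$, namely $\vert\ip{tf, f'}\vert^2 \leq \Vert tf\Vert^2\Vert f'\Vert^2$, combined with a lower bound on $\vert\ip{tf, f'}\vert$ obtained by integration by parts. Here $\ip{tf, f'} = \intinf t\,f(t)\overline{f'(t)}\dt$ is genuinely quaternion-valued, so I would bound its modulus from below by the absolute value of its scalar part. Using $\frac{\mathrm{d}}{\dt}\vert f\vert^2 = f'\overline{f} + f\overline{f'} = 2\mathcal{S}(f\overline{f'})$ and integrating by parts gives $2\mathcal{S}(\ip{tf,f'}) = \intinf t\,\frac{\mathrm{d}}{\dt}\vert f(t)\vert^2\dt = -\Vert f\Vert^2 = -1$, whence $\vert\ip{tf,f'}\vert \geq \vert\mathcal{S}(\ip{tf,f'})\vert = 1/2$ and therefore $\sigma_t^2\sigma_\omega^2 \geq 1/4$.

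The main obstacle, and the only real departure from the complex case, is precisely this last step: because $\bbH$ is noncommutative, the inner product $\ip{tf,f'}$ is a full quaternion rather than a real number, so one cannot simply identify it with $-1/2$; the correct move is to extract its scalar part and invoke $\vert q\vert \geq \vert\mathcal{S}(q)\vert$. I would also need to justify the vanishing of the boundary term $[t\vert f(t)\vert^2]_{-\infty}^{+\infty}$ in the integration by parts, which holds on a dense class of sufficiently decaying $f$ (and makes the inequality trivial otherwise, since then some integral diverges), and to confirm that the Cauchy--Schwarz inequality is valid for the $\bbH$-valued inner product with right scalars --- both standard, but worth stating explicitly.
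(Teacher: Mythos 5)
Your proof is correct and follows essentially the same route as the paper's: after reducing to the centered, normalized case, you use the derivative property and Plancherel to write $\sigma_\omega^2 = \Vert f'\Vert^2$, apply Cauchy--Schwarz in the right quaternionic Hilbert space $L^2(\bbR,\bbH)$, and bound the quaternion-valued inner product from below via $\vert q\vert \geq \vert \mathcal{S}(q)\vert$ together with an integration by parts yielding $-\Vert f\Vert^2$ --- exactly the paper's chain of inequalities. Your explicit remarks on the vanishing boundary term and the validity of Cauchy--Schwarz for the $\bbH$-valued inner product simply make precise what the paper states in passing.
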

	\begin{proof}
	Using a change of variable, it is sufficient to prove the theorem in the case where $u = \xi = 0$. First let us note that
	\begin{equation}
		\sigma_t^2\sigma_\omega^2 = \frac{1}{2\pi \Vert f \Vert^4} \intinf \vert tf(t)\vert^2\mathrm{d}t\intinf \vert \omega \hat{f}(\omega) \vert^2\mathrm{d}\omega
	\end{equation}
	Since $\hat{f}(\omega)\bm j \omega$ is the Fourier transform of $f'(t)$, using the Plancherel identity applied to $\hat{f}(\omega)\bm j \omega$ yields
	\begin{equation}
		\sigma_t^2\sigma_\omega^2 = \frac{1}{\Vert f \Vert^4} \intinf\vert tf(t)\vert^2\mathrm{d}t\intinf \vert f'(t)\vert^2\mathrm{d}t
	\end{equation}
	Schwarz's inequality implies (legit since $L^2(\bbR, \bbH)$ is a Hilbert space)
	\begin{align}
		\sigma_t^2\sigma_\omega^2 & \geq \frac{1}{\Vert f \Vert^4} \left\vert \intinf f'(t) \overline{tf(t)}\mathrm{d}t\right\vert^2\\
		& \geq \frac{1}{\Vert f \Vert^4} \left[\intinf \mathcal{S}\left(f'(t) \overline{tf(t)}\right)\mathrm{d}t\right]^2\\
		&\geq \frac{1}{\Vert f \Vert^4} \left[\intinf  \frac{t}{2}\left( f'(t)\overline{f(t)} + f(t)\overline{f'(t)}\right)\mathrm{d}t\right]^2\\
		&\geq \frac{1}{4\Vert f \Vert^4} \left[\intinf  t(\vert f(t) \vert^2)'\mathrm{d}t\right]^2\\
	\end{align}
	Now, using integration by parts we obtain
	\begin{equation}
	\sigma_t^2\sigma_\omega^2 \geq \frac{1}{4\Vert f \Vert^4} \left[\intinf \vert f(t) \vert^2\mathrm{d}t\right]^2 = \frac{1}{4}.
	\end{equation}
	\end{proof}

\subsection{Bivariate signals and choice of the axis of the transform}\label{section:2choiceofaxis}

It is well known that bivariate signals can be equivalently described as a pair of real signals or a single complex valued signal. For instance, given a bivariate signal $f(t)$,
\begin{equation}\label{eq:bivariateCivaluedComplexDecomposition}
	f(t) = f_r(t) + \bmi f_{\bmi}(t)
\end{equation}
where $f_r(t)$ and $f_{\bmi}(t)$ are real-valued is a valid decomposition for $f(t)$ in $\bbCi$. Actually, any subfield $\bbC_{\bmmu}$ could have been used to decompose $f(t)$ as they are all isomorphic to the complex field. In the sequel, we will thus use the term bivariate, complex or $\bbCi$-valued interchangeably depending on the context, and will assume that a bivariate signal $f$ is of the form (\ref{eq:bivariateCivaluedComplexDecomposition}).


There are several possibilities for the choice of the axis $\bmmu$ of the QFT. If one chooses $\bmmu=\bmi$, then this is simply the classical complex Fourier transform. This shows that the QFT definition encompasses the standard complex case, but that it allows different choices as well. This point is interesting as it is well-known that the complex Fourier transform does not exhibit Hermitian symmetry when $f$ is complex-valued. This prevents from defining standard time-frequency tools (\emph{e.g.} the analytic signal) that rely on the Hermitian symmetry property of the transform.  

It is interesting to wonder whether it is possible to obtain a transform that exhibits an  ``Hermitian-like'' symmetry for a $\bbCi$-valued signal? Proposition \ref{prop:ihermitian} below brings a positive answer. Writing down explicitly the QFT of axis $\bmmu$ for a signal as in (\ref{eq:bivariateCivaluedComplexDecomposition}), we have
\begin{equation}
	\hat{f}(\omega) = \int f_r(t)\exp(-\bmmu \omega t)\dt + \bmi \int f_{\bmi}(t)\exp(-\bmmu \omega t)\dt.
\end{equation}
The key idea is now to treat \emph{separately} the real and imaginary part of $f$ in the QFT. That is, we would like to impose to the QFT to maintain the separation between real and imaginary parts of $f$ in the frequency domain. This constraint is satisfied easily provided that $\bmmu$ is orthogonal to $\bmi$, that is $\mathcal{S}(\bmi\bmmu) = 0$. The simplest choice for $\bmmu$ is to take the second imaginary axis in $\bbH$, namely $\bmmu=\bmj$, and we will stick to this choice in the remaining of the paper. Therefore we choose to use the following QFT definition:
\begin{equation}
	\hat{f}(\omega) = \intinf f(t)\exp(-\bmj\omega t)\dt.
\end{equation}
This QFT of a complex-valued signal exhibits a particular symmetry.
\begin{proposition}[$\bmi$-Hermitian symmetry]\label{prop:ihermitian}
	Let $f:\bbR\rightarrow \bbCi$, its QFT satisfies
	\begin{equation}\label{eq:iHermitianSymmetry}
		\hat{f}(-\omega) = -\bmi\hat{f}(\omega)\bmi = \overline{\hat{f}(\omega)}^{\bmi}.
	\end{equation}
\end{proposition}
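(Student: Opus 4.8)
The plan is to compute $\hat{f}(-\omega)$ directly from the QFT definition with axis $\bmmu = \bmj$, exploiting the fact that $f$ takes values in $\bbCi$, \ie $f(t) = f_r(t) + \bmi f_{\bmi}(t)$ with $f_r, f_{\bmi}$ real-valued. First I would write
\begin{equation}
	\hat{f}(-\omega) = \intinf f(t)\exp(\bmj\omega t)\dt,
\end{equation}
and the goal is to relate the kernel $\exp(\bmj\omega t)$ back to $\exp(-\bmj\omega t)$ using the involution $\overline{\,\cdot\,}^{\bmi}$. The key algebraic facts are that $\overline{\bmj}^{\bmi} = -\bmi\bmj\bmi = \bmj$ (since $\bmj$ anticommutes with $\bmi$) whereas the involution fixes real scalars and $\bmi$ itself, so $\overline{\exp(-\bmj\omega t)}^{\bmi} = \exp(\bmj\omega t)$.

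The main step is therefore to verify that applying $\overline{\,\cdot\,}^{\bmi}$ to $\hat{f}(\omega)$ reproduces $\hat{f}(-\omega)$. I would compute
\begin{equation}
	\overline{\hat{f}(\omega)}^{\bmi} = \overline{\intinf f(t)\exp(-\bmj\omega t)\dt}^{\bmi} = \intinf \overline{f(t)\exp(-\bmj\omega t)}^{\bmi}\dt,
\end{equation}
using that $\overline{\,\cdot\,}^{\bmi}$ is $\bbR$-linear and commutes with integration. Since $\overline{\,\cdot\,}^{\bmi}$ is a (real) algebra automorphism — unlike conjugation it reverses no order — we have $\overline{pq}^{\bmi} = \overline{p}^{\bmi}\,\overline{q}^{\bmi}$, so the integrand factors as $\overline{f(t)}^{\bmi}\,\overline{\exp(-\bmj\omega t)}^{\bmi}$. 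Because $f(t) \in \bbCi$, the involution about $\bmi$ fixes $f(t)$, giving $\overline{f(t)}^{\bmi} = f(t)$; combined with $\overline{\exp(-\bmj\omega t)}^{\bmi} = \exp(\bmj\omega t)$ this yields $\hat{f}(-\omega)$, establishing the rightmost equality of \eqref{eq:iHermitianSymmetry}. The middle equality is just the definition $\overline{q}^{\bmi} = -\bmi q\bmi$ unwound.

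The only genuine subtlety — and what I would treat as the main obstacle — is confirming the multiplicative behaviour of the involution. One must check that $\overline{\,\cdot\,}^{\bmi}$ is an automorphism (preserving the order of factors), $\overline{pq}^{\bmi} = \overline{p}^{\bmi}\,\overline{q}^{\bmi}$, rather than an anti-automorphism like quaternion conjugation; this follows from $\overline{q}^{\bmi} = -\bmi q\bmi$ since $(-\bmi p\bmi)(-\bmi q\bmi) = -\bmi pq\bmi$. I would also want to be explicit that the involution commutes with the integral and with $\bbR$-linear operations, which is immediate from its expression as conjugation by the fixed constant $\bmi$. Everything else is direct computation, so I expect the proof to be short; the conceptual content lies entirely in the interplay between the axis choice $\bmmu = \bmj$ (orthogonal to $\bmi$) and the constraint $f(t) \in \bbCi$, which together make the involution about $\bmi$ act as the sought Hermitian-like symmetry.
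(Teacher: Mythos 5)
Your proof is correct in substance but takes a genuinely different route from the paper's. The paper argues by reduction to the classical real-signal case: it splits $f = f_r + \bmi f_{\bmi}$ with $f_r, f_{\bmi}$ real-valued, notes that $\hat{f}_r, \hat{f}_{\bmi}$ are $\bbCj$-valued and obey the ordinary Hermitian symmetry $\hat{f}_r(-\omega) = \overline{\hat{f}_r(\omega)}$ (the QFT restricted to real signals being isomorphic to the standard Fourier transform), and then recombines using the commutation rule $\bmi z = \overline{z}\bmi$ for $z \in \bbCj$. You instead work directly with the involution: $q \mapsto \overline{q}^{\bmi} = -\bmi q \bmi$ is an $\bbR$-algebra automorphism (order-preserving, as you rightly verify against quaternion conjugation), it fixes $\bbCi$ pointwise, it commutes with integration, and it flips the sign of the kernel's exponent, so applying it under the integral turns $\hat{f}(\omega)$ into $\hat{f}(-\omega)$. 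Your route is more self-contained and makes the structural mechanism transparent --- everything rests on $\bmj$ anticommuting with $\bmi$ together with $f$ being $\bbCi$-valued --- whereas the paper's route buys the interpretation that the $\bmi$-Hermitian symmetry is literally the classical Hermitian symmetry acting componentwise, which justifies the terminology.

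One correction is needed: your displayed key fact $\overline{\bmj}^{\bmi} = -\bmi\bmj\bmi = \bmj$ has the wrong sign. Anticommutation gives $-\bmi\bmj\bmi = \bmi^2\bmj = -\bmj$, so the involution \emph{negates} $\bmj$ (and $\bmk$); indeed, if it fixed $1$, $\bmi$ and $\bmj$, then being an automorphism it would fix $\bmk = \bmi\bmj$ as well and would be the identity, which is absurd. Your subsequent step silently uses the correct value: $\overline{\exp(-\bmj\omega t)}^{\bmi} = \cos(\omega t) + \bmj\sin(\omega t) = \exp(\bmj\omega t)$ holds precisely \emph{because} $\overline{\bmj}^{\bmi} = -\bmj$; had the involution actually fixed $\bmj$, it would fix the kernel and your computation would prove the false statement $\overline{\hat{f}(\omega)}^{\bmi} = \hat{f}(\omega)$. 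Since the rest of the argument is consistent with the correct sign, this is a typo to repair rather than a gap; with it fixed, the proof is complete.
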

\begin{proof}
We write $f(t) = f_r(t) + \bmi f_{\bmi}(t)$, with $f_r(t)$ and $f_{\bmi}(t)$ real-valued functions. Therefore, the QFT of axis $\bmj$ of $f$ gives
\begin{equation}
	\hat{f}(\omega) = \hat{f}_r(\omega) + \bmi \hat{f}_{\bmi}(\omega),
\end{equation}
with $\hat{f}_r$, $\hat{f}_{\bmi}$ the QFTs of $f_r, f_{\bmi}$ respectively. Recall that $f_r$ and $f_{\bmi}$ are real-valued functions so that $\hat{f}_r(\omega)$, $\hat{f}_{\bmi}(\omega)$ are $\bbCj$-valued. Therefore, their QFT satisfy the usual Hermitian symmetry (as the QFT for real signals is isomorphic to the standard Fourier transform). As a result:
\begin{equation}
	\hat{f}(-\omega) = \hat{f}_r(-\omega) + \bmi \hat{f}_{\bmi}(-\omega) = \overline{\hat{f}_r(\omega)} + \bmi\overline{\hat{f}_{\bmi}(\omega)} = -\bmi \hat{f}(\omega)\bmi.
\end{equation}
The last equation is obtained recalling that if $z \in \bbCj$, then $\bmi z = \overline{z}\bmi$.
\end{proof}

This result is fundamental and will permit to construct the quaternion embedding of a complex signal in section \ref{section:quaternionEmbedding}. This result is, in regard of the Hermitian symmetry of the Fourier transform of real signals, the very first building block of nonstationary bivariate signal processing tools.

\section{Quaternion embedding of complex-valued signals}
\label{section:quaternionEmbedding}

For simple real-valued signals, it is natural to write $f(t) = a(t)\cos[\phi(t)]$, where $a(t) \geq 0$ and $\phi(t)$ are respectively identified as the \emph{instantaneous amplitude} and \emph{phase} \cite{cohen1995time,flandrin1998time}. When the signal is richer, time-frequency analysis techniques aim to provide a set of pairs $[a_k(t), \phi_k(t)]$ that faithfully decribes the signal. Nevertheless, the decomposition $f(t) = a(t)\cos[\phi(t)]$ is the simplest time-frequency model. It has many shortcomings, but understanding it is helpful in understanding how things work.

The analytic signal of $f$ provides a well known way to associate to a simple real signal $f$ an instantaneous amplitude $a(t)$ and phase $\phi(t)$ \cite{cohen1995time}. It is obtained by suppressing negative frequencies from the spectrum \cite{Picinbono1997,Boashash1992}. This operation is motivated by the Hermitian symmetry of real signals spectrum: suppressing negative frequencies spectrum carry no information. This operation associates an unique \emph{canonical} pair $[a(t), \phi(t)]$ to the real signal $f$ \cite{Picinbono1997}. 

When the signal $f$ takes complex values (\ie is bivariate) the analytic signal approach is not directly applicable, since its spectrum no longer exhibits Hermitian symmetry. It was proposed in \cite{lilly2010bivariate} to overcome this limitation by considering an analytic and anti-analytic signal, the latter being the analytic signal associated to the negative part of the spectrum. However, this approach has some inherent limitations due to the complex Fourier transform.

The usual complex Fourier transform brings some intrinsic ambiguity between \emph{geometric content} and \emph{frequency content}. For instance, a linear polarized signal $f$ oscillating in the direction $\exp(\bmi\theta)$ at angular frequency $\omega$ can be written as
\begin{equation}
	f(t) = \exp(\bmi\theta)\exp(\bmi\omega t) = \exp(\bmi\omega t + \bmi\theta).
\end{equation}
One can read either $f$ as a linearly polarized signal at pulsation $\omega$ in the direction $\exp(\bmi\theta)$, or as the analytic signal of $g(t) = \cos(\omega t + \theta)$. Therefore the geometric content (the direction of oscillation) is mixed with the frequency information.

The dimensionality of the complex field is actually responsible for not being able to separate directly the geometric from the frequency content. Using the QFT and the four dimensions of the quaternion algebra will permit to account for the geometric and frequency variables separately. Moreover, the $\bmi$-Hermitian symmetry (\ref{eq:iHermitianSymmetry}) of the QFT of complex-valued signals will lead to the definition of a QFT counterpart of the analytic signal: the \emph{quaternion embedding}. We build upon the recent paper by one of the authors \cite{le2014instantaneous} and present developments on the interpretation of the physical quantities offered by this construction.

\subsection{Definition}

\begin{definition}[Quaternion embedding of complex signals]
Let $f: \bbR \rightarrow \bbCi$. Its quaternion embedding $f_+$ is defined as
\begin{equation}
	f_+(t) = f(t) + \mathcal{H}\left\lbrace f\right\rbrace (t)\:\bmj,
\end{equation}
where $\mathcal{H}\lbrace\cdot\rbrace$ denotes the Hilbert transform
\begin{equation}
	\mathcal{H}\left\lbrace f\right\rbrace(t) = \frac{1}{\pi}\mathrm{p.v.}\intinf \frac{f(\tau)}{t-\tau}\mathrm{d}\tau = \frac{1}{\pi}\mathrm{p.v.}\intinf \frac{f_r(\tau)}{t-\tau}\mathrm{d}\tau + \bmi\frac{1}{\pi}\mathrm{p.v.}\intinf \frac{f_{\bmi}(\tau)}{t-\tau}\mathrm{d}\tau
\end{equation}
where $\mathrm{p.v.}$ stands for Cauchy principal value.
\end{definition}

By construction, $f_+$ is $\bbH$-valued. The original signal $f$ is recovered in the $\left\lbrace 1, \bmi\right\rbrace$-components of $f_+$, and the Hilbert transform of $f$ is found in the $\left\lbrace \bmj, \bmk\right\rbrace$ part of $f_+$. Let us compute the QFT of axis $\bmj$ of the quaternion embedding $f_+$: 
\begin{equation}
	\hat{f}_+(\omega) = \hat{f}(\omega) + \mathcal{F}\left\lbrace 	\mathcal{H}\left\lbrace f\right\rbrace\right\rbrace(\omega)\bmj,
\end{equation}
which can be further decomposed since the quaternion Hilbert transform can be separated into two Hilbert transforms on the real and $\bmi$-imaginary parts of $f$ respectively. As a consequence,
\begin{align}
	\mathcal{F}\left\lbrace \mathcal{H}\left\lbrace f_{r}\right\rbrace\right\rbrace(\omega) &= -\mathrm{sign}(\omega)\hat{f}_r(\omega)\:\bmj, \\
	 \mathcal{F}\left\lbrace \mathcal{H} \left\lbrace f_{\bmi}\right\rbrace\right\rbrace(\omega) &= -\mathrm{sign}(\omega)\hat{f}_{\bmi}(\omega)\:\bmj, 
\end{align}
with $\hat{f}_r, \hat{f}_{\bmi}$ being $\bbCj$-valued so that the QFT of the quaternion Hilbert transform of $f$ reads
\begin{equation}
	\mathcal{F}\left\lbrace \mathcal{H}\left\lbrace f\right\rbrace\right\rbrace(\omega) = -\mathrm{sign}(\omega)\left(\hat{f}_r(\omega)+\bmi \hat{f}_{\bmi}(\omega)\right)\bmj = -\mathrm{sign}(\omega)\hat{f}(\omega)\:\bmj,
\end{equation}
so that
\begin{equation}\label{eq:QFTHextension}
	\hat{f}_+(\omega) = (1+\mathrm{sign}(\omega))\hat{f}(\omega) = 2\:U(\omega)\hat{f}(\omega),
\end{equation}
where $U(\omega)$ is the Heaviside unit step function. Equation (\ref{eq:QFTHextension}) shows that: \emph{(i)} the quaternion embedding of any complex-valued signal has a one-sided spectrum \emph{(ii)} $f$ and $f_+$ have the same frequency content in the positive frequencies region $\omega > 0$, up to a factor $2$. These properties are direct continuations of those of the analytic signal of a real signal.

Let us define the real Hardy space $H^2(\bbR, \bbH)$ such as 
\begin{equation}
\label{eq:Hardyspace}
 	H^2(\bbR, \bbH) = \left\lbrace f\in L^2(\bbR, \bbH) \middle\vert \hat{f}(\omega) = 0 \text{ for all } \omega < 0\right\rbrace.
 \end{equation} 
By construction, the quaternion embedding $f_+$ of a complex signal $f \in L^2(\bbR, \bbCi)$ belongs to $f_+ \in H^2(\bbR, \bbH)$. More precisely, the quaternion embedding method establishes a one-to-one mapping between $L^2(\bbR, \bbCi)$ and $H^2(\bbR, \bbH)$, that is between a complex signal and its quaternion embedding.

\subsection{Instantaneous complex amplitude, ellipticity and phase}

To any real signal one can associate the canonical pair $[a(t), \phi(t)]$ using the polar form of its analytic signal. We show here that one can associate a \emph{canonical triplet} to any complex signal using its quaternion embedding, by means of an appropriate factorization.

In \cite{le2014instantaneous} a complex signal is described by a canonical pair $[a(t), \phi(t)]$, with $a(t)$ and $\phi(t)$ taking complex values. The authors use the recently introduced polar Cayley-Dickson form of a quaternion \cite{sangwine2010quaternion}. While $a(t)$ is interpreted as the instantaneous complex amplitude, the instantaneous phase $\phi(t)$ is restricted to be real, as the meaning of a complex instantaneous phase is not clearly interpretable. This restriction prevents from considering generic bivariate signals. What follows circumvents this issue.


The key idea is to decompose any quaternion embedding $f_+$ using the Euler polar form introduced in section \ref{section:quaternionAlgebra}:
\begin{equation}
	f_+(t) = \vert f_+(t)\vert\exp[\bmi\theta(t)]\exp[-\bmk \chi(t)]\exp[\bmj \phi(t)],
\end{equation}
or equivalently,
\begin{equation}\label{eq:hextensionDefpolar}
	f_+(t) = a(t)\exp[-\bmk \chi(t)]\exp[\bmj \phi(t)].
\end{equation}
This decomposition defines the \emph{canonical triplet} $[a(t), \chi(t), \phi(t)] \in \bbCi \times [-\pi/4, \pi/4] \times [-\pi/2, \pi/2]$. The canonical triplet describes the bivariate signal $f(t)$. By construction $f(t)$ is the projection of $f_+(t)$ onto $\bbCi$ so that
\begin{equation}\label{eq:MESmodel}
	f(t) = a(t)\left[\cos\phi(t)\cos\chi(t) +\bmi \sin\phi(t)\sin\chi(t)\right].
\end{equation}
Note the choice of the negative sign in (\ref{eq:hextensionDefpolar}), which leads to a positive second term in (\ref{eq:MESmodel}).

A clear physical interpretation of the canonical triplet $[a(t), \chi(t), \phi(t)]$ is possible under some usual restrictions. We consider bivariate AM-FM signals, which can be seen as a thorough generalization of the AM-FM univariate signals. A (monocomponent) bivariate AM-FM signal is such that the variations of $\phi(t)$ are much rapid than the rate of variation of the other components,
\begin{equation}\label{eq:polarizedAMFMrequirements}
 	\vert \phi'(t)\vert \gg \vert\chi'(t)\vert, \vert a'(t)\vert/\vert a(t)\vert.
 \end{equation} 

The quantity $\phi(t)$ is called the instantaneous phase of $f$. This choice is natural since $\phi(t)$ appears in (\ref{eq:polarEulerformquaternion}) along the same axis as the QFT, \ie $\bmj$ in our case. The instantaneous frequency of $f$  is thus given by $\phi'(t)$. The term $a(t)$ is referred to as the instantaneous complex amplitude. The term $\chi(t)$ is to be intepreted as the instantaneous ellipticity.

Figure \ref{fig:parametersQuaternionEmbedding} depicts the ellipse traced out by the model (\ref{eq:MESmodel}) over time under condition (\ref{eq:polarizedAMFMrequirements}). Over a short enough period of time, the quantities $a(t) = a$ and $\chi(t) = \chi$ can thus be assumed as constants. The dot on the ellipse represents the value of the complex signal $f(t)$ at a fixed arbitrary time. The argument of $a \in \bbCi$, $\theta = \arg a$ gives the orientation of the instantaneous ellipse. Its modulus $\vert a\vert$ acts as a scale factor of this ellipse. The ellipticity angle $\chi$ controls the shape of the ellipse. If $\chi = 0$, the ellipse becomes a line segment, while if $\chi = \pm \pi/4$ one obtains a circle. The sign of $\chi(t)$ controls the direction of rotation in the ellipse. Finally, the angle $\phi(t)$ gives the position of $f(t)$ on the ellipse. Section \ref{section:ExamplesBivariateSignals} explores further this model with some examples.

The ellipse parameters can evolve over time. The succession of instantaneous ellipses describe a three-dimensional tube (time being considered as an axis), and thus define the bivariate envelope of the signal.


Lilly and Olhede \cite{lilly2010bivariate} have proposed a model similar to (\ref{eq:MESmodel}), called the Modulated Elliptical Signal (MES) model. It originates from oceanographic signal processing applications with a slightly different parametrization. The quaternion embedding method \emph{a posteriori} justifies the relevance of this model.

\subsection{Instantaneous Stokes parameters}

There is a straightforward connection between the canonical triplet and the notion of polarization state in optics. The pair $[\theta(t), \chi(t)]$ represents exactly the spherical coordinates on the Poincar\'{e} sphere that describes the polarization state of the bivariate signal $f$. Moreover, the square magnitude $\vert a(t) \vert^2$ is the radius of this Poincar\'{e} sphere. 

In practice, one often characterizes the polarization state via the associated Stokes parameters \cite[p. 31]{born2000principles}. In optics, there are four Stokes parameters which read (in the case of fully polarized light) 
\begin{align}
	S_0(t) & = \vert a(t) \vert^2\\
	S_1(t) & = \vert a(t) \vert^2 \cos 2\chi (t) \cos 2\theta(t),\\
	S_2(t) & = \vert a(t) \vert^2 \cos 2\chi (t) \sin 2\theta(t), \\
	S_3(t) & = \vert a(t) \vert^2 \sin 2\chi (t).
\end{align}
  Here $S_0(t)$ is simply the instantaneous energy density of the signal. The three Stokes parameters $S_1, S_2, S_3$ decompose $S_0$ in terms of polarization content, as for all $t$, $S_0^2(t) = S_1^2(t) + S_2^2(t) + S_3^2(t)$. Note that the four Stokes parameters are energetic quantities.

  In constrast with what is generally used in physics, the Stokes parameters defined here are time-dependent. In optics textbooks Stokes parameters are commonly defined as time-average values \cite[Eq. (64), p. 554]{born2000principles}: the time-averaging operator acts with respect to the oscillation of the electromagnetic field. Here, it corresponds to averaging only with respect to the instantaneous phase $\phi(t)$. The so-defined \emph{instantaneous Stokes} parameters thus describe evolving polarization properties.


The quaternion embedding $f_+$ leads naturally to the Stokes parameters of $f$. The first Stokes parameter is simply obtained by $\vert f_+(t)\vert^2 = S_0(t)$. Moreover basic quaternion arithmetic shows that 
\begin{equation}
	f_+(t)\involj{f_+(t)} = S_1(t) + \bmi S_2(t) - \bmk S_3(t),
\end{equation}
so that the three remaining Stokes parameters are readily obtained from the quantity $f_+(t)\involj{f_+(t)}$. This allows to identify quantities of type $f_+(t)\involj{f_+(t)}$ as natural energetic measures of the polarization content of $f$, while $\vert f_+(t) \vert^2$ is simply an energy density. Note that the two quantities $\vert f(t) \vert^2$ and $f(t)\involj{f(t)}$ are also invariants of the QFT, as stated by the Parseval-Plancherel theorem \ref{theorem:ParsevalPlancherel}.

\begin{figure}
\centering
\includegraphics{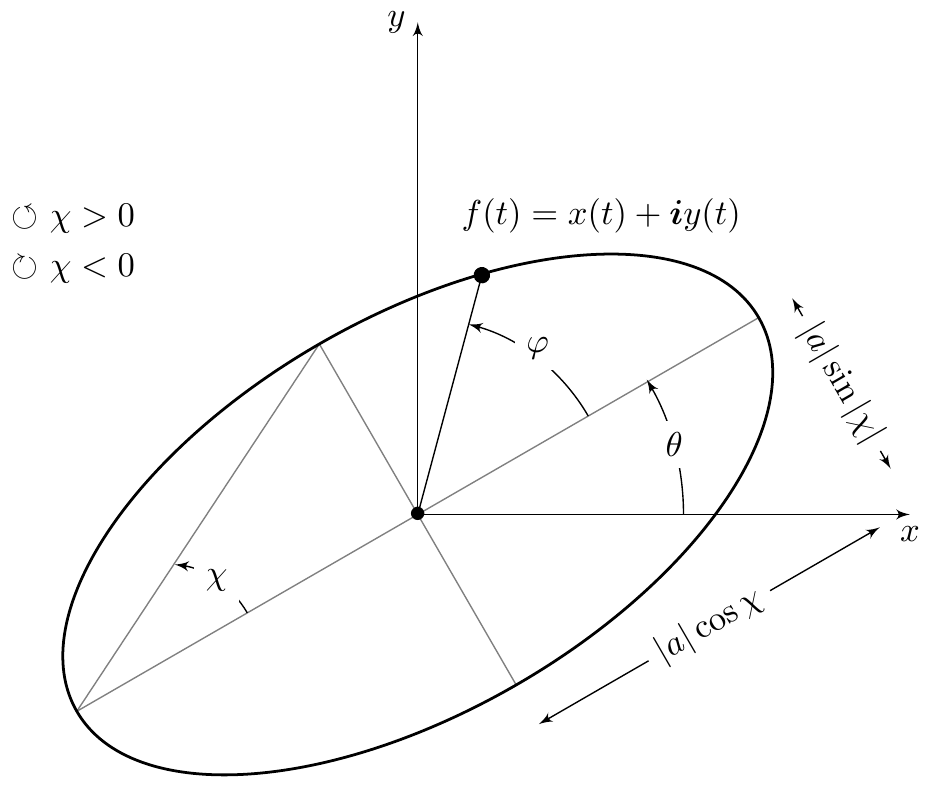}\caption{Representation of the signification of canonical triplet associated to a bivariate signal using its quaternion embedding.}\label{fig:parametersQuaternionEmbedding}
\end{figure}

\subsection{Examples}
\label{section:ExamplesBivariateSignals}
Equation (\ref{eq:MESmodel}) is the bivariate (polarized) counterpart of the well-known AM-FM model $a(t)\cos[\phi(t)]$. Investigating which signals are generated by (\ref{eq:MESmodel}) is thus of great importance. 

The simplest kind of bivariate signals obtained from (\ref{eq:MESmodel}) is the class of \emph{monochromatic polarized signals}. In this case, the instantaneous phase is taken as $\phi(t) = \omega_0 t$, $\omega_0 > 0$, while the other parameters are constant: $a(t) = a_0\exp(\bmi \theta_0) \in \bbCi$ and $\chi(t) = \chi_0 \in [-\pi/4, \pi/4]$. Then
\begin{equation}
	f(t) = a_0\exp(\bmi \theta_0)\left[\cos\chi_0\cos(\omega_0t) + \bmi\sin\chi_0\sin(\omega_0t)\right].
\end{equation}
The angle $\theta_0$ gives the orientation of the signal $f$. Now, browsing through possible values for $\chi_0$, we get
\begin{itemize}
	\item $\chi_0 = 0$: linearly polarized signal, $f(t) = a_0\exp(\bmi \theta_0)\cos(\omega_0t)$.
	\item $\chi_0 \in ]-\pi/4, \pi/4[_{\setminus \left\lbrace 0 \right\rbrace}$: elliptical polarization, counterclockwise (CCW) if $\chi_0 > 0$, clockwise (CW) otherwise. 
	\item $\chi_0 = \pm \pi/4$: circular polarization. It is the degenerate case of the Euler polar form. Then $\theta_0 = 0$, and the signal reads $f(t) = 2^{-1/2}a_0\exp(\pm\bmi \omega_0 t)$. Again, if $\chi_0 = \pi/4$ we have counterclockwise (CCW) polarization and clockwise (CW) if $\chi_0 = -\pi/4$.
\end{itemize}

Figure \ref{fig:chirpExamplesignal}a shows an example of a polarized AM-FM signal specimen. 
Figure \ref{fig:chirpExamplesignal}b show that this signal exhibits a linear chirp in frequency, along with both slowly varying instantaneous orientation and ellipticity. Namely, the instantaneous orientation is flipped slowly, while the instantaneous ellipticity evolves slowly from $\chi > 0$ to $\chi \simeq 0$. 

\begin{figure}
	\centering
	\includegraphics[width=\textwidth]{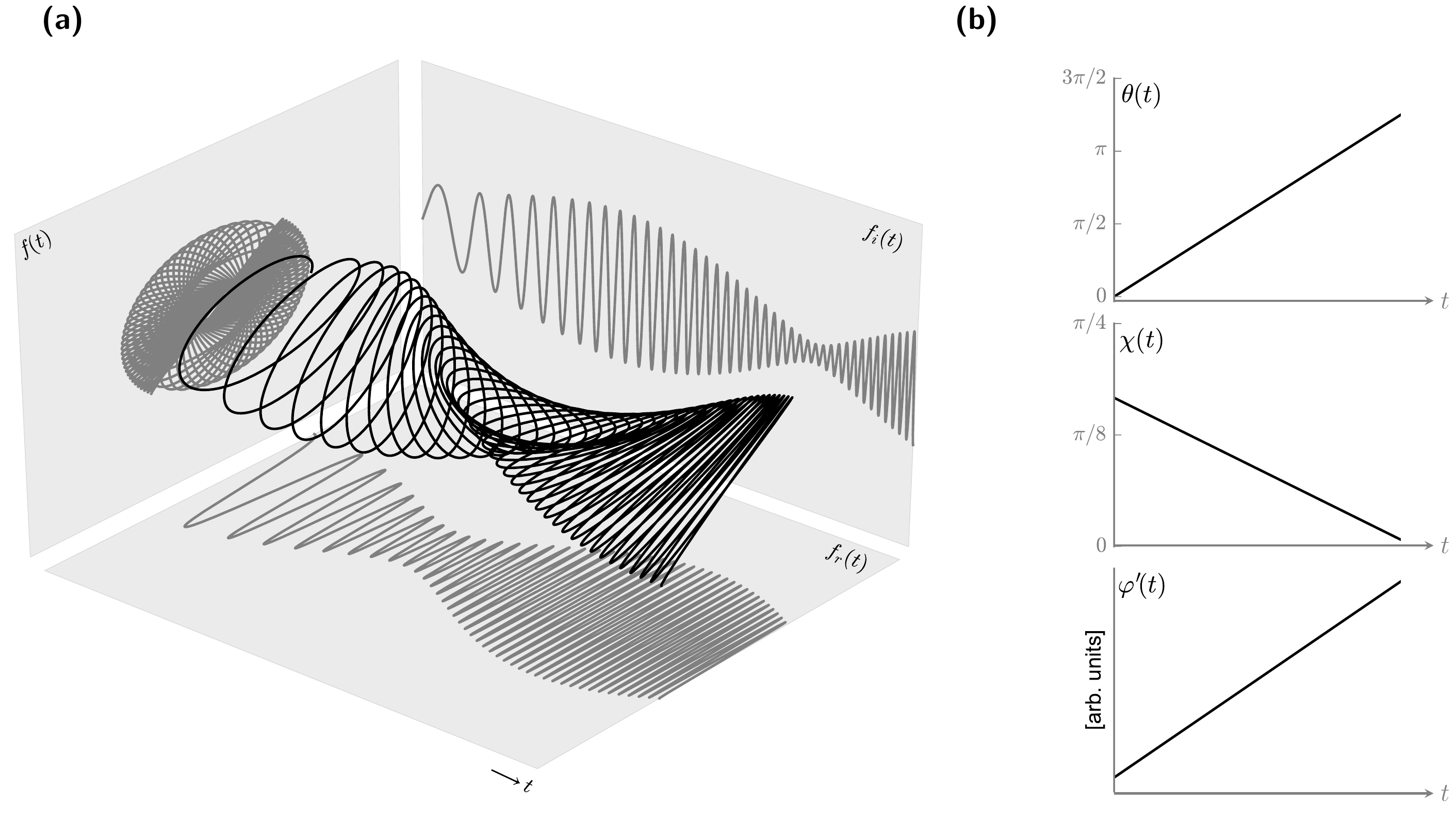}\caption{(a) Example of a signal generated via equation (\ref{eq:MESmodel}). For clarity, the bivariate signal has been developed in time along the third direction. Projection planes show respectively its complex plane trace, as well as real and imaginary components evolving with time. (b) Instantaneous orientation, ellipticity and phase associated with the bivariate signal. Here $\vert a(t) \vert = 1 $.}\label{fig:chirpExamplesignal}
\end{figure}
\subsection{Limitations}
Exactly as in the univariate case, the quaternion embedding does not provide useful information when the signal is multicomponent. Consider the signal $f(t) = \alpha\cos \omega_0t + \alpha \cos \omega_1t$, with $a \in \bbCi$, which is a sum of two linearly polarized signals at angular frequencies $\omega_0$ and $\omega_1$. Its quaternion embedding reads
\begin{equation}
	f_+(t) = 2\alpha e^{\bm j \omega_0 t} + 2\alpha e^{\bm j \omega_1 t} = \alpha \cos\left(\frac{\omega_1-\omega_0}{2}t\right)\exp\left(\bm j\frac{\omega_0+\omega_1}{2}t\right),
\end{equation}
which gives us immediately the Euler polar form, with the canonical parameters given by $\chi(t) = 0$, $a(t) = \alpha \cos\left(\frac{\omega_1-\omega_0}{2}t\right)$ and $\phi(t) = \frac{1}{2}(\omega_0 + \omega_1)t$. While $\chi(t) = 0$ shows that we have indeed linear polarization, the values of $\theta(t)$ and $\phi(t)$ poorly reflect the multicomponent nature of the signal $f$. This motivates the construction of dedicated time-frequency representations for bivariate signals. 

\section{Time-frequency representations of bivariate signals}
\label{section:TimeFreqRepresentation}
We will focus here on two fundamental theorems, leaving the interpretation of such representations to Section \ref{section:ExtractionInstantaneousQuantities}. Examples illustrating the use of the newly introduced time-frequency representations will be given in Section \ref{section:ExamplesSection6}.

Theorem \ref{theorem:CompletenessSTQFT} shows that the quaternion short-term Fourier transform defines a time-frequency representation for bivariate signals, allowing identification of time-frequency Stokes parameters. Theorem \ref{theorem:CompletenessQCWT} shows similar results in the context of time-scale analysis of bivariate signals.

\subsection{Quaternion Short-Term Fourier Transform}
\subsubsection{Definition and completeness}
The quaternion embedding is unable to separate multiple components, so that we introduce the quaternion short-term Fourier transform (Q-STFT). In this section, we suppose that $f \in L^2(\bbR, \bbH)$.

Let $g$ be a real and symmetric normalized window, with $\Vert g \Vert = 1$. For $u, \xi \in \bbR$, its translated-modulated version is
\begin{equation}
	g_{u, \xi}(t) = e^{\bm j \xi t}g(t-u).
\end{equation}
The exponential is on the left. This choice has no influence since $g$ is real, but this is for convenience due to the permutation arising when taking the quaternion conjugate. The functions $g_{u, \xi}(t)$ define \emph{time-frequency-polarization atoms}. The definition of $g_{u, \xi}$ is classical: the term \emph{polarization} solely indicate that the atoms are $\bbCj$-valued, rather than $\bbCi$-valued.

The resulting Q-STFT of $f$ is given by
\begin{equation}
	Sf(u, \xi) = \ip{f, g_{u, \xi}} = \intinf f(t)g(t-u)e^{-\bm j \xi t}\mathrm{d}t.
\end{equation}
This first fundamental theorem to build a time-frequency analysis of bivariate signals ensures energy conservation and reconstruction formula.
\begin{theorem}[Inversion formula and energy conservation]\label{theorem:CompletenessSTQFT}
	Let $f\in L^2(\bbR, \bbH)$. Then the inversion formula reads
	\begin{equation}\label{eq:inversionFormulaSTFT}
		f(t) = \frac{1}{2\pi}\intinf\intinf Sf(u, \xi)g_{u, \xi}(t)\mathrm{d}\xi \mathrm{d}u,
	\end{equation}
	and the energy of $f$ is conserved,
	\begin{equation}\label{eq:energyConservationQSTFT}
		\intinf \vert f(t) \vert^2\mathrm{d}t = \frac{1}{2\pi}\intinf\intinf\vert Sf(u, \xi)\vert^2\mathrm{d}u\mathrm{d}\xi,
	\end{equation}
	as well as the polarization properties of $f$:
	\begin{equation}\label{eq:polarConservationQSTFT}
		\intinf f(t) \involj{f(t)}\mathrm{d}t = \frac{1}{2\pi}\intinf\intinf Sf(u, \xi)\involj{Sf(u, \xi)}\mathrm{d}u\mathrm{d}\xi.
	\end{equation}
\end{theorem}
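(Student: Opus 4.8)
The plan is to reduce all three identities to the QFT results of Theorem~\ref{theorem:ParsevalPlancherel} by recognizing that, for each fixed $u$, the Q-STFT is nothing but a windowed QFT. Concretely, I would set $f_u(t) \defeq f(t)g(t-u)$ and observe that, since the window enters through multiplication and the modulation $e^{-\bmj\xi t}$ sits on the right exactly as the kernel in definition (\ref{eq:definitionQFTaxismu}) with $\bmmu=\bmj$, one has $Sf(u,\xi) = \widehat{f_u}(\xi)$. Thus the partial transform $\xi \mapsto Sf(u,\xi)$ is the QFT of axis $\bmj$ of $f_u \in L^2(\bbR,\bbH)$, and every statement about $Sf$ at fixed $u$ becomes a statement about $\widehat{f_u}$.

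For the inversion formula (\ref{eq:inversionFormulaSTFT}), I would apply the inverse QFT to $\widehat{f_u}$ to recover $f(t)g(t-u) = \frac{1}{2\pi}\intinf Sf(u,\xi)e^{\bmj\xi t}\mathrm{d}\xi$. Multiplying on the right by the real scalar $g(t-u)$, which is central in $\bbH$, turns $e^{\bmj\xi t}g(t-u)$ into the atom $g_{u,\xi}(t)$, giving $f(t)g(t-u)^2 = \frac{1}{2\pi}\intinf Sf(u,\xi)g_{u,\xi}(t)\mathrm{d}\xi$. Integrating over $u$ and using $\intinf g(t-u)^2\mathrm{d}u = \Vert g\Vert^2 = 1$ then yields (\ref{eq:inversionFormulaSTFT}).

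The two conservation laws follow the same template but invoke the two Plancherel identities of Theorem~\ref{theorem:ParsevalPlancherel} applied to $f_u$. For energy, the first Plancherel formula gives $\intinf |Sf(u,\xi)|^2\mathrm{d}\xi = 2\pi\intinf |f(t)|^2 g(t-u)^2\mathrm{d}t$, using that $g$ is real so $|f_u|^2 = |f|^2 g^2$; integrating in $u$ and again using $\Vert g\Vert=1$ produces (\ref{eq:energyConservationQSTFT}). For the polarization identity I would use the second Plancherel formula with $\bmmu=\bmj$, namely $\intinf f_u\involj{f_u}\mathrm{d}t = \frac{1}{2\pi}\intinf Sf(u,\xi)\involj{Sf(u,\xi)}\mathrm{d}\xi$. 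Here the crucial step is to compute $f_u\involj{f_u}$: since $g(t-u)$ is real it is fixed by $\involj{\cdot}$ and central, and using the anti-automorphism property $\involj{(ab)} = \involj{b}\involj{a}$ already exploited in the proof of Theorem~\ref{theorem:ParsevalPlancherel}, one finds $f_u\involj{f_u} = g(t-u)^2 f\involj{f}$. Integrating over $u$ collapses the window factor to $1$ and gives (\ref{eq:polarConservationQSTFT}).

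The routine content here is bookkeeping; the genuine care is in the noncommutative algebra. The main obstacle, and the one place where the quaternion setting departs from the classical scalar STFT, is ensuring that every factor moved past another is legitimately central or correctly conjugated: the reality of $g$ is what makes $g(t-u)$ commute with the quaternion-valued $f(t)$ and with the kernel, while the identity $\involj{(ab)} = \involj{b}\involj{a}$ is what lets the window survive intact in the polarization computation. A secondary technical point is the appeal to Fubini's theorem to exchange the $t$- and $u$-integrations, justified by the $L^2$ integrability of $f$ together with the normalization of $g$.
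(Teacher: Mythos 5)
Your proof is correct, but it takes a genuinely different route from the paper's. You work slice-wise in time: fixing $u$, you identify $\xi \mapsto Sf(u,\xi)$ as the QFT of axis $\bmj$ of the windowed signal $f_u(t) = f(t)g(t-u)$, and then apply the inverse QFT and both Plancherel identities of Theorem~\ref{theorem:ParsevalPlancherel} directly to $f_u$, collapsing the window at the end via $\intinf g(t-u)^2\,\mathrm{d}u = \Vert g\Vert^2 = 1$. The paper instead works in the $u$-variable: it rewrites $Sf(u,\xi) = (f\ast g_{0,\xi})(u)e^{-\bmj\xi u}$, computes the QFT of $Sf(\cdot,\xi)$ with respect to $u$ via the convolution property (Proposition~\ref{proposition:Convolution}), obtaining $\hat{f}(\omega+\xi)\hat{g}(\omega)$, and then applies Parseval and Plancherel in $u$. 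Your approach buys uniformity and economy: all three identities follow from a single template, and the only quaternionic care needed is the centrality of the real window together with the anti-automorphism $\involj{(ab)} = \involj{b}\involj{a}$, both of which you invoke correctly; you also avoid the convolution theorem and its hypothesis that one factor be $\bbCj$-valued. The paper's route buys the explicit frequency-domain formula $\hat{f}(\omega+\xi)\hat{g}(\omega)$ for the Q-STFT, which exhibits its time-frequency localization and is independently useful; the noncommutative subtlety there is that $\hat{g}$, being the QFT of a real window, is $\bbCj$-valued and hence commutes with the kernel $e^{\bmj t\omega}$. One caveat applies equally to both arguments: the iterated integrals are manipulated formally, and for general $f \in L^2(\bbR, \bbH)$ the double integral in (\ref{eq:inversionFormulaSTFT}) need not converge absolutely, so the inversion formula should strictly be read in a weak or $L^2$ sense; since the paper's own proof operates at the same level of rigor, this is not a gap relative to its standard, and your explicit appeal to Fubini (justified by $f\in L^2$ and the normalized window, which put $f_u \in L^1 \cap L^2$) is if anything slightly more careful.
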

\begin{proof}
	See \ref{appendix:proofCompletenessSTQFT}.
\end{proof}
This fundamental result extends classical results \cite{mallat2008wavelet} to the bivariate setting. Equation (\ref{eq:energyConservationQSTFT}) shows that $\vert Sf(u, \xi)\vert^2$ defines an energy density in the time-frequency plane. Equation (\ref{eq:polarConservationQSTFT}) shows that the instantaneous state of polarization given by the Stokes parameters is conserved by the representation. Thus we can call the quantity $Sf(u, \xi)\involj{Sf(u, \xi)}$ the \emph{polarization spectrogram} of $f$. 


\subsubsection{Redundancy: the RKHS structure}
The Q-STFT does not span the whole functional space $L^2(\bbR^2, \bbH)$, where $\bbR^2$ stands here for the time-frequency plane. It has a reproducing kernel Hilbert space (RKHS) structure. Writing the Q-STFT of some function $f \in L^2(\bbR, \bbH)$, we have
\begin{equation}
	Sf(u_0, \xi_0) = \intinf f(t)g(t-u_0)e^{-\bm j \xi_0 t}\mathrm{d}t
\end{equation}
for some $(u_0, \xi_0) \in \bbR^2$. The inversion formula (\ref{eq:inversionFormulaSTFT}) yields
\begin{equation}\label{eq:RkhsQSTFT}
	Sf(u_0, \xi_0) = \iint_{\bbR^2} Sf(u, \xi) K(u, \xi, u_0, \xi_0)\mathrm{d}u\mathrm{d}\xi,
\end{equation}
where we have introduced the kernel $K$
\begin{equation}
	K(u, \xi, u_0, \xi_0) = \ip{g_{u, \xi}, g_{u_0, \xi_0}}.
\end{equation}
Equation (\ref{eq:RkhsQSTFT}) shows that the image of $L^2(\bbR, \bbH)$ by the Q-STFT is a RKHS with kernel $K$. Note that this result is a bivariate extension of a property of the usual STFT \cite{mallat2008wavelet}.

\subsubsection{Examples}

To understand the behavior of the Q-STFT, we look at two very simple signals. Section \ref{section:ExtractionInstantaneousQuantities} will carry a more systematic interpretation of these results.

\paragraph{Monochromatic polarized signal} Let $f(t)$ such that its quaternion embedding reads $f_+(t) = a_0\exp(-\bmk \chi_0)\exp(\bmj \omega_0t)$, with $a_0 \in \bbCi$, $\chi_0 \in [-\pi/4, \pi/4]$ and $\omega_0 \in \bbR$. Its Q-STFT reads
\begin{equation}
	Sf_+(u, \xi) = a_0e^{-\bmk\chi_0}\hat{g}(\xi - \omega_0)e^{-\bmj(\xi - \omega_0)u}
\end{equation}
which is localized around the frequency $\xi = \omega_0$ in the time-frequency plane, as expected. The polarization spectrogram of $f_+$ is
\begin{equation}
	Sf_+(u, \xi)\involj{Sf_+(u, \xi)} = \vert \hat{g}(\xi - \omega_0) \vert^2 \left[ \underbrace{\vert a_0 \vert^2\cos2\theta_0\cos2\chi_0}_{S_1} + \bmi \underbrace{\vert a_0 \vert^2\sin2\theta_0\sin2\chi_0}_{S_2} - \bmk \underbrace{\vert a_0 \vert^2 \sin2\chi_0}_{S_3}\right],
\end{equation}
so that the polarization spectrogram of $f_+$ gives immediatly the three time-frequency Stokes parameters that fully characterize $f_+$.
\paragraph{Polarized linear chirp}
Consider a polarized linear chirp $f$, such that its quaternion embedding reads $f_+(t) =  a_0\exp(-\bmk \chi_0)\exp(\bmj \alpha t^2)$, with $a_0 \in \bbCi$, $\chi_0 \in [-\pi/4, \pi/4]$ and $ \alpha \in \bbR$.
When the window is gaussian, that is $g(t) = (\pi \sigma^2)^{-1/4}\exp[-t^2/(2\sigma^2)]$, it is possible to give a closed-form expression for $Sf_+(u, \xi)$ (see e.g. \cite[p. 41]{mallat2008wavelet}):
\begin{equation}
\begin{split}
	Sf_+(u, \xi) =&  a_0e^{-\bmk \chi_0}\left( \frac{2\pi\sigma^2}{1-\bmj2\sigma^2\alpha}\right)^{1/2}\times\\
	&\exp\left[-\frac{(\sigma^2/2)(\xi-2\alpha u)^2}{1+4\sigma^4\alpha^2}\right]\exp\left[\bmj\frac{-(\xi u - \alpha u^2)-\alpha\sigma^4\xi^2}{1+4\sigma^4\alpha^2}\right].
\end{split}
\end{equation}
It shows that $Sf_+(u, \xi)$ is localized in the time-frequency plane around the instantaneous frequency line of $f$, $\xi(u) = 2\alpha u$.
The polarization spectrogram of $f_+$ is
\begin{equation}
	Sf_+(u, \xi)\involj{Sf_+(u, \xi)} = C(\xi - 2\alpha u)\left[ \underbrace{\vert a_0 \vert^2\cos2\theta_0\cos2\chi_0}_{S_1} + \bmi \underbrace{\vert a_0 \vert^2\sin2\theta_0\sin2\chi_0}_{S_2} - \bmk \underbrace{\vert a_0 \vert^2 \sin2\chi_0}_{S_3}\right],
\end{equation}
where $C$ is a real function depending on the window. One obtains the time-frequency Stokes parameters of the signal.
\subsection{Quaternion continuous wavelet transform}

The fixed size in the time-frequency plane of Q-STFT atoms prevents from analysing a large range of frequencies over short time scales. Following classical theory, we will now introduce the quaternion continuous wavelet transform (Q-CWT). The wavelet atoms will be $\bbCj$-valued, to \emph{mimick} the Q-STFT atoms. The resulting quaternion continuous wavelet transform decouples geometric and frequency content. 

\subsubsection{Definition and completeness}
Throughout this section, we restrict our analysis to the real Hardy space $H^2(\bbR, \bbH)$ introduced in Section \ref{section:quaternionEmbedding}. Recall that one can associate to any $f \in L^2(\bbR, \bbCi)$ a unique element $f_+ \in H^2(\bbR, \bbH)$ called the quaternion embedding of $f$. 

\begin{definition}[Polarization wavelets]\label{definition:polarizationwavelet}
	A polarization wavelet is a $\bbCj$-valued function $\psi \in H^2(\bbR, \bbCj)$, normalized with $\Vert \psi \Vert = 1$ and centered at $t=0$. \emph{Time-scale-polarization atoms} are defined as translated-dilated versions of the wavelet $psi$:
\begin{equation}
	\psi_{u,s}(t) = \frac{1}{\sqrt{s}}\psi\left(\frac{t-u}{s}\right).
\end{equation}
The translation and dilation parameters run over the Poincar\'{e} half plane \ie $(u, s) \in \mathcal{P}$ (see Table \ref{table:notations}). These atoms are normalized, so that $\Vert \psi_{u,s}\Vert = 1$.
\end{definition} 
This definition is again classical. The term \emph{polarization} indicates that the atoms are $\bbCj$-valued, rather than $\bbCi$-valued. 

The quaternion continuous wavelet transform of a bivariate signal $f$ at time $u$ and scale $s$ is:
\begin{equation}
	Wf(u,s) = \ip{f, \psi_{u,s}} = \int_{-\infty}^{+\infty}f(t)\frac{1}{\sqrt{s}}\overline{\psi\left(\frac{t-u}{s}\right)}\dt
\end{equation}

The second fundamental theorem of this paper ensures energy conservation and the existency of an inversion formula for the quaternion continuous wavelet transform.
\begin{theorem}[Inversion formula and energy conservation]\label{theorem:CompletenessQCWT} Let $f \in H^2(\bbR, \bbH)$, and a polarization wavelet $\psi \in H^2(\bbR, \bbCj)$. Suppose that the admissibility condition is satisfied, that is 
\begin{equation}\label{eq:admissibilityCWT}
	C_\psi = \int_0^{+\infty} \frac{\vert\hat{\psi}(\omega)\vert^2}{\omega}\mathrm{d}\omega < \infty.
\end{equation}
The inverse reconstruction formula reads
\begin{equation}\label{eq:inverseReconstructionCWT}
	f(t) = C_\psi^{-1}\int_0^{+\infty}\intinf Wf(u,s)\frac{1}{\sqrt{s}}\psi\left(\frac{t-u}{s}\right)\mathrm{d}u\frac{\mathrm{d}s}{s^2},
\end{equation}
and the energy is conserved,
\begin{equation}\label{eq:energyConservationQCWT}
	\frac{1}{C_\psi}\int_0^{+\infty}\intinf\vert Wf(u,s)\vert^2\mathrm{d}u\frac{\mathrm{d}s}{s^2} = \Vert f\Vert^2,
\end{equation}
as well as the polarization properties
\begin{equation}
	\intinf f(t) \involj{f(t)}\mathrm{d}t = \frac{1}{C_\psi}\int_0^{+\infty}\intinf Wf(u,s)\involj{Wf(u,s)}\mathrm{d}u\frac{\mathrm{d}s}{s^2} \label{eq:polarConservationQCWT}
\end{equation}
\end{theorem}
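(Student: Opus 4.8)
The plan is to transport everything to the quaternion frequency domain, where the transform acts by (right) multiplication, and then to invoke the Parseval--Plancherel identities of Theorem~\ref{theorem:ParsevalPlancherel}. The starting observation is that, at each fixed scale $s>0$, the map $u \mapsto Wf(u,s)$ is a convolution. Writing $\psi_s(t) = s^{-1/2}\psi(t/s)$ and $\check{\psi}_s(t) = \overline{\psi_s(-t)}$, one checks directly that $Wf(u,s) = (f \ast \check{\psi}_s)(u)$, and crucially $\check{\psi}_s \in L^2(\bbR,\bbCj)$ because $\psi$ is a polarization wavelet. Proposition~\ref{proposition:Convolution} then applies, since the second factor lies in the invariant subfield $\bbCj$.

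The key computational step is the QFT of $\check\psi_s$. The scaling property gives $\hat\psi_s(\omega) = \sqrt{s}\,\hat\psi(s\omega)$; and for any $\bbCj$-valued $g$ a short calculation (using that $g(t)$ commutes with $e^{-\bmj\omega t}$ inside $\bbCj$) shows the QFT of $\overline{g(-t)}$ is $\overline{\hat g(\omega)}$. Hence $\mathcal{F}\{Wf(\cdot,s)\}(\omega) = \hat f(\omega)\,\sqrt{s}\,\overline{\hat\psi(s\omega)}$, and I set $\Psi_s(\omega) \defeq \sqrt{s}\,\overline{\hat\psi(s\omega)} \in \bbCj$.

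For energy conservation I apply the first Plancherel identity in the variable $u$, obtaining $\int |Wf(u,s)|^2\,\mathrm{d}u = \frac{1}{2\pi}\int |\hat f(\omega)|^2\,s\,|\hat\psi(s\omega)|^2\,\mathrm{d}\omega$ via $|pq| = |p|\,|q|$. Integrating against $\mathrm{d}s/s^2$ and applying Fubini, the $s$-integral decouples: the substitution $\nu = s\omega$ gives $\int_0^\infty |\hat\psi(s\omega)|^2\,\mathrm{d}s/s = C_\psi$, independent of $\omega$ (only $\omega>0$ contributes since $f,\psi \in H^2$). What remains is $\frac{C_\psi}{2\pi}\int |\hat f|^2\,\mathrm{d}\omega = C_\psi \Vert f\Vert^2$ by Plancherel, which is \eqref{eq:energyConservationQCWT}. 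The polarization identity \eqref{eq:polarConservationQCWT} follows the same route with the \emph{second} Plancherel identity, after noting that on $\bbCj$ one has $\involj{z} = \overline{z}$, so that $\Psi_s(\omega)\involj{\Psi_s(\omega)} = s\,|\hat\psi(s\omega)|^2$ is a real scalar and commutes out of $\hat f(\omega)\involj{\hat f(\omega)}$; the same substitution produces the constant $C_\psi$.

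Finally, for the inversion formula \eqref{eq:inverseReconstructionCWT} I would first establish the sesquilinear version of the conservation law, namely $\int_0^\infty\int Wf(u,s)\,\overline{Wh(u,s)}\,\mathrm{d}u\,\mathrm{d}s/s^2 = C_\psi\,\ip{f,h}$ for all $f,h \in H^2(\bbR,\bbH)$, obtained exactly as above from the first Parseval identity. Denoting by $\tilde f$ the right-hand side of \eqref{eq:inverseReconstructionCWT}, I then compute $\ip{\tilde f, h}$: pushing the inner product inside the $(u,s)$-integral and recognizing $\int \psi_{u,s}(t)\overline{h(t)}\,\mathrm{d}t = \overline{Wh(u,s)}$ (the atom sits to the right of $Wf$, so the factor order is preserved) gives $\ip{\tilde f, h} = C_\psi^{-1}\int_0^\infty\int Wf\,\overline{Wh}\,\mathrm{d}u\,\mathrm{d}s/s^2 = \ip{f,h}$, whence $\tilde f = f$ as $h$ is arbitrary. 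The main obstacle is the justification of these interchanges of integration in the noncommutative, $\bbH$-valued Hilbert-space setting (Fubini, and the weak interpretation of the reconstruction integral), together with the order-sensitive lemma on the QFT of conjugate-reflected $\bbCj$-valued functions; once these are secured, the remainder is a direct transcription of the classical CWT argument.
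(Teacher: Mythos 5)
Your proof is correct, and its energy and polarization parts coincide with the paper's own argument: the same key computation (the QFT in $u$ of $Wf(\cdot,s)$ equals $\hat f(\omega)\sqrt{s}\,\overline{\hat\psi(s\omega)}$, which the paper derives by direct interchange of integrals where you invoke Proposition~\ref{proposition:Convolution}), followed by the first and second Plancherel identities in $u$, Fubini, and the substitution $\nu=s\omega$ producing $C_\psi$; your observation that $\involj{z}=\overline{z}$ on $\bbCj$, so that $\Psi_s\involj{\Psi_s}=s\vert\hat\psi(s\omega)\vert^2$ is a real scalar that commutes past $\hat f(\omega)\involj{\hat f(\omega)}$, is exactly the point that makes the noncommutative bookkeeping harmless, and it is what the paper leaves implicit in ``following the same lines.'' Where you genuinely diverge is the inversion formula. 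The paper argues in the strong form: it defines the synthesis integral $b(t)=\int_0^{+\infty}\intinf Wf(u,s)\psi_s(t-u)\,\mathrm{d}u\,\mathrm{d}s/s^2$, computes its QFT directly, and finds $\hat b(\omega)=C_\psi\hat f(\omega)$ pointwise, so that $f=C_\psi^{-1}b$ by injectivity of the QFT. You instead polarize the energy identity into the sesquilinear law $\int_0^{+\infty}\intinf Wf\,\overline{Wh}\,\mathrm{d}u\,\mathrm{d}s/s^2=C_\psi\ip{f,h}$ and identify the reconstruction weakly via $\ip{\tilde f,h}=\ip{f,h}$; your care about factor order (the atom to the right of $Wf$, $\ip{\psi_{u,s},h}=\overline{Wh(u,s)}$) is what keeps this valid over the right Hilbert space $L^2(\bbR,\bbH)$. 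The paper's route is shorter and gives the frequency-domain identity explicitly; yours buys a weak interpretation of the synthesis integral (useful when $b$ is only defined distributionally) at the cost of one extra step you should state: since each atom $\psi_{u,s}$ lies in $H^2(\bbR,\bbCj)$, the weakly defined $\tilde f$ belongs to the closed subspace $H^2(\bbR,\bbH)$, so testing against $h=\tilde f-f\in H^2(\bbR,\bbH)$ legitimately yields $\Vert\tilde f-f\Vert=0$ even though $h$ ranges only over $H^2$ rather than all of $L^2$. With that one-line addition, your argument is complete at the same level of rigor as the paper's (both take Fubini-type interchanges for granted).
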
 
\begin{proof}	
See \ref{appendix:proofCompletenessCWT}.
\end{proof}

Equation (\ref{eq:energyConservationQCWT}) indicates that the quantity $\vert Wf(u,s)\vert^2$ can be interpreted as an energy density in the time-scale plane. Equation (\ref{eq:polarConservationQCWT}) means that the polarization properties of $f$ are conserved by the representation. This leads to the definition of the \emph{polarization scalogram}, which is the image of the coefficients $Wf(u,s)\involj{Wf(u,s)}$ in the time-scale plane.

While taking the wavelet $\psi \in H^2(\bbR, \bbCj)$ is necessary to extract time-frequency (or time-scale) tones, the condition that $f\in H^2(\bbR, \bbH)$ is not restrictive. Indeed, since there is a one-to-one correspondence between a bivariate signal $f \in L^2(\bbR, \bbCi)$ and its quaternion embedding $f_+ \in H^2(\bbR, \bbH)$, the results presented here are also valid for signals $f \in L^2(\bbR, \bbCi)$. One then has
\begin{equation}
	Wf(u, s) = \frac{1}{2}Wf_+(u, s).
\end{equation}

\paragraph{Polarization wavelet design} Definition \ref{definition:polarizationwavelet} is classical, and polarization wavelets are constructed similarly to classical analytic wavelets \cite{mallat2008wavelet}.
Namely, they can be built as the frequency modulation of a real, symmetric window $g$, which yields 
\begin{equation}
	\psi(t) = g(t)\exp(\bmj \eta t)
\end{equation}
which admits the QFT $\hat{\psi}(\omega) = \hat{g}(\omega - \eta)$. If $\hat{g}(\omega) = 0$ for $ \vert \omega \vert > \eta$, then the wavelet belongs to $H^2(\bbR, \bbCj)$.

\subsubsection{RKHS structure}
As with the Q-STFT, the image of $L^2(\bbR, \bbH)$ by the Q-CWT does not span the whole $L^2(\mathcal{P}, \bbH)$ space, where $\mathcal{P}$ denotes the time-scale half-upper plane. Rather, it spans only a subspace of it, where the redundancy of the representation is encoded within a RKHS structure. Starting from the definition of the Q-CWT and plugging the inversion formula (\ref{eq:inverseReconstructionCWT}), one gets
\begin{equation}
		Wf(u_0, s_0) = C_\psi^{-1}\int_0^{+\infty}\intinf Wf(u, s) K(u, s, u_0, s_0)\mathrm{d}u\frac{\mathrm{d}s}{s^2},
\end{equation}
where we have introduced the kernel $K$ which reads
\begin{equation}
	K(u, s, u_0, s_0) = \ip{\psi_{u, s}, \psi_{u_0, s_0}}.
\end{equation}
\subsubsection{Example}

Let $f(t)$ be such that its quaternion embedding reads $f_+(t) = a_0\exp(-\bmk \chi_0)\exp(\bmj \omega_0t)$, with $a_0 \in \bbCi$, $\chi_0 \in [-\pi/4, \pi/4]$ and  $\omega_0 \in \bbR$. Its Q-CWT reads
\begin{equation}
	Wf_+(u, s) = a_0e^{-\bmk\chi_0}s^{1/2}\overline{\hat{\psi}(s\omega_0)}e^{\bmj\omega_0u} = a_0e^{-\bmk\chi_0}s^{1/2}\overline{\hat{g}(s\omega_0 - \eta)}e^{\bmj\omega_0u}
\end{equation}
which is localized around scale $s = \eta/\omega_0$ in the time-scale plane. On the line $s = \eta/\omega_0$, one has $Wf_+(u, \eta/\omega_0) = a_0e^{-\bmk\chi_0}\eta^{1/2}\omega_0^{-1/2}\overline{\hat{g}(0)}e^{\bmj\omega_0u}$, so that the polarization information about $f$ can be extracted directly from the Q-CWT coefficients. The polarization scalogram of $f_+$ is
\begin{equation}
	Wf_+(u,s)\involj{Wf_+(u, s)} = s\vert \hat{\psi}(s\omega_0)\vert^2\left[ \underbrace{\vert a_0 \vert^2\cos2\theta_0\cos2\chi_0}_{S_1} + \bmi \underbrace{\vert a_0 \vert^2\sin2\theta_0\sin2\chi_0}_{S_2} - \bmk \underbrace{\vert a_0 \vert^2 \sin2\chi_0}_{S_3}\right],
\end{equation}
leading directly to time-frequency Stokes parameters.
\section{Asymptotic analysis and ridges}\label{section:ExtractionInstantaneousQuantities}

The goal of this section is the so-called \emph{ridge analysis}, that is to provide results about the energy localisation in the time-frequency plane (resp. time-scale plane) of bivariate signals. Ridge analysis has attracted much interest since the 90's. Early work from \cite{delprat1992asymptotic} provided an asymptotic approach. Subsequent theoretical results were developed in a more general setting in \cite{mallat2008wavelet} and in the context of analytic wavelet transform by \cite{lilly2010analytic}.

The discussion here follows closely the approach presented in \cite{delprat1992asymptotic} for univariate signals. It relies upon an \emph{asymptotic hypothesis} on the signal, which essentially means that condition (\ref{eq:polarizedAMFMrequirements}) is satisfied: the phase is varying much faster than the geometric components. Finally, we will discuss how well known algorithms in ridge analysis can be applied to the bivariate setting.

\subsection{Ridges of the quaternion short-term Fourier transform}
The time-frequency-polarization atoms $g_{u, \xi}$ are of the form $g_{u, \xi}(t) = g(t-u)\exp(\bmj\xi t)$, where $g$ is a real, symmetric and normalized window. Under some conditions detailed in \ref{appendix:WindowRidges}, the \emph{ridge} of the transform is given by the set of points $(u, \xi) \in \Omega$ such that
\begin{equation}
	\xi = \xi_{\mathcal{R}}(u) = \phi'(u).
\end{equation}
It gives the instantaneous frequency of the signal. On the ridge, the Q-STFT becomes
\begin{equation}
		Sf(u, \xi_{\mathcal{R}}(u)) \simeq  \sqrt{\frac{\pi}{2}}f_+(u)\frac{g(0)}{\sqrt{\vert \phi''(u)\vert }}e^{\sign{(\phi''(u))}\bmj\frac{\pi}{4}}e^{-\bmj\xi u}.
\end{equation}
This shows that the Q-STFT is on the ridge simply the quaternion embedding of $f$ up to some corrective factor with values in $\bbCj$. As a consequence, assuming that the ridge has been extracted from the Q-STFT coefficients, the polarization properties of $f$ are readily obtained from the polar Euler decomposition of $Sf(u, \xi_{\mathcal{R}}(u))$.

The polarization spectrogram on the ridge is
\begin{equation}
	Sf(u, \xi_{\mathcal{R}}(u))\involj{Sf(u, \xi_{\mathcal{R}}(u))} \simeq \frac{\pi}{2}\frac{g(0)^2}{\vert \phi''(u)\vert}f_+(u)\involj{f_+(u)}.
\end{equation}
On the ridge, one has directly access to the instantaneous Stokes parameters of $f_+$.



\subsection{Ridges of the quaternion continuous wavelet transform}

Recall that we consider wavelets $\psi(t) = g(t)\exp(\bmj\eta t)$, where $g$ is a real, symmetric window and $\eta >0$ is the central frequency such that $\psi$ belongs to $H^2(\bbR, \bbCj)$. Under some conditions detailed in \ref{appendix:WindowRidges}, the \emph{ridge} of the transform is given by the set of points $(u, \xi) \in \mathcal{P}$ such that
\begin{equation}
	s = s_{\mathcal{R}}(u) = \frac{\eta}{\phi'(u)},
\end{equation}
which again yields the instantaneous frequency of the signal. The restriction of the Q-CWT to the ridge is:
\begin{equation}
	Wf(u, s_{\mathcal{R}}(u)) \simeq  \sqrt{\frac{\pi}{2s_{\mathcal{R}}(u)}}f_+(u)\frac{g(0)}{\sqrt{\vert \phi''(u)\vert }}e^{\sign(\phi''(u))\bmj\frac{\pi}{4}}.
\end{equation}
The Q-CWT on the ridge is simply the quaternion embedding of $f$ up to some corrective factor with values in $\bbCj$. Computing the Euler polar decomposition of $Wf(u, s_{\mathcal{R}}(u))$ immediatly gives the instantaneous polarization properties. 

The polarization scalogram on the ridge is
\begin{equation}
	Wf(u, s_{\mathcal{R}}(u))\involj{Wf(u, s_{\mathcal{R}}(u))}  \simeq  \frac{\pi}{2s_{\mathcal{R}}(u)}\frac{g(0)^2}{\vert \phi''(u)\vert}f_+(u)\involj{f_+(u)},
\end{equation}
which again shows that the instantaneous Stokes parameters are available on the ridge of the polarization scalogram. 


\subsection{Ridge extraction and discussion}

It is possible to show that the ridge can be extracted from the $\bmj$-phase of the Q-STFT and Q-CWT coefficients, as originally suggested in the univariate case by \cite{delprat1992asymptotic} (not discussed here). This approach is known to have shortcomings when the signal-to-noise ratio is low, and other approaches have to be used instead \cite{Carmona1997,Carmona1999}. Existing ridge extraction algorithms can be thoroughly adapted to the bivariate setting.


A detailed discussion on ridge extraction methods is out the scope of the present paper. In our simulations we have used a heuristic method which identifies at each instant $u$ the local maxima of the energy density in the time-frequency (resp. time-scale) plane. This method, although not optimal, provides reasonably good results for our purpose.

\section{Time-frequency representations of bivariate signals: illustration}

\label{section:ExamplesSection6}
We finally illustrate the time-frequency representations presented in section \ref{section:TimeFreqRepresentation} and the subsequent results in section \ref{section:ExtractionInstantaneousQuantities}. Two synthetic examples are presented, which are \emph{polarized} counterparts of classical examples (see \emph{e.g.} \cite{mallat2008wavelet}). A real-world example is also provided.

Two equivalent time-frequency representations of bivariate signals can be proposed. The first one is directly related to classical frequency analysis. One can extract ridges from a time-frequency energy density. Instantaneous polarization properties are thus unveiled using the Euler polar form on these ridges. Another approach is to compute the polarization spectrogram (resp. polarization scalogram) of the signal. This gives the three time-frequency Stokes parameters of the signal.

We explore the benefits of the two methods. As they are equivalent representations, we will use the term \emph{polarization spectrogram (resp. scalogram)} to denote one or the other.


\subsection{Sum of linear polarized chirps}

Consider a superposition of two linear chirps, each having its own polarization properties (\ref{eq:linearchirp1}) and (\ref{eq:linearchirp2}). The signal is defined on the time interval $[0, 1]$ by $N = 1024$ equispaced samples. It can be written as a the superposition $f(t) = f_1(t) + f_2(t)$, where their respective polar Euler decomposition reads
\begin{align}
a_1(t) = \exp\left(\bmi \frac{\pi}{4}\right), \:\chi_1(t) = \frac{\pi}{6} - t, \:\phi_1(t) = 50\pi + 250\pi t\label{eq:linearchirp1}\\
a_2(t) = \exp\left(\bmi\frac{\pi}{4} 10 t\right), \: \chi_2(t) = 0, \:\phi_2(t) = 150\pi + 250\pi t\label{eq:linearchirp2}
\end{align}

This signal can be seen as a polarized version of the classical parallel linear chirps signal \cite{mallat2008wavelet}. The Q-STFT was computed with a Hanning window of size $101$ samples, providing good time-frequency clarity.

Figure \ref{fig:twoLinearChirps} shows the two equivalent polarization spectrograms of $f$. Figure \ref{fig:twoLinearChirps}a, b and c depicts the three time-frequency Stokes parameters. Figure \ref{fig:twoLinearChirps}d, e and f show respectively the time-frequency energy density,  instantaneous orientation and ellipticity.

The three Stokes parameters provide a reading of time-frequency-polarization properties of the two chirps. They have been normalized to be meaningully interpreted. We have normalized $S_1$, $S_2$ and $S_3$ by $S_0$ which is simply the time-frequency energy density depicted in figure \ref{fig:twoLinearChirps}d. While $S_3$ is directly an image of the ellipticity, the orientation has to be recovered by simultaneously inspecting the three Stokes parameters.

The time-frequency energy density permits the identification of the two linear chirps. This time-frequency energy density can be retrieved from the three Stokes parameters, as $S_0^2 = S_1^2 + S_2^2 + S_3^2$. Moreover, \ref{fig:twoLinearChirps}e, \ref{fig:twoLinearChirps}f show instantaneous orientation and ellipticity extracted from the ridge. The polarization properties of each chirp are correctly recovered.

\begin{figure}
	\includegraphics[width=\textwidth]{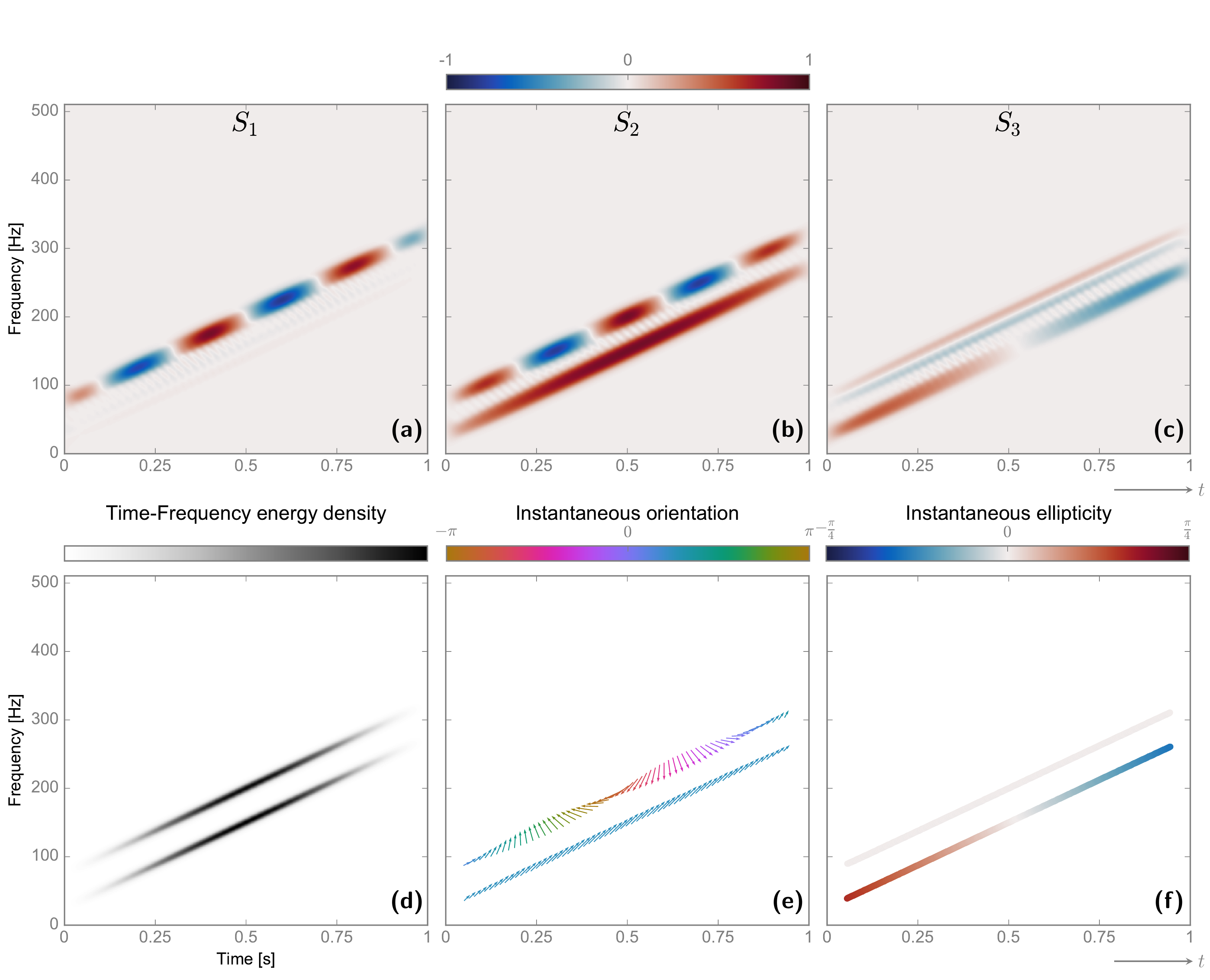}\caption{Sum of two polarized linear chirp and its two polarization spectrogram representation. (a), (b) and (c) Stokes parameters in the time-frequency plane. (d) Time-frequency energy density (e) instantaneous orientation on the ridge (f) instantaneous ellipticity on the ridge. Top chirp is linearly polarized, whereas the bottom chirp shows a slowly varying ellipticity.}\label{fig:twoLinearChirps}
\end{figure}

\subsection{Sum of hyperbolic polarized chirps}

Consider two hyperbolic chirps, each having its own polarization properties. The signal is defined on the time interval $[0, 1]$, with $N = 1024$ samples. It can be written as the superposition $f(t) = f_1(t) + f_2(t)$, where the polar Euler decomposition are 
\begin{align}
a_1(t) = \exp\left(-\bmi \frac{\pi}{3}\right), \:\chi_1(t) = \frac{\pi}{6}, \:\phi_1(t) =\frac{15\pi}{0.8 - t}\\
a_2(t) = 0.8\exp\left(\bmi 5t\right), \: \chi_2(t) = -\frac{\pi}{10}, \:\phi_2(t) = \frac{5\pi}{0.8-t}.
\end{align}
The Q-CWT was computed using a Morlet wavelet with $\eta = 5$.

Figure \ref{fig:twoHyperbolicChirps} shows the two equivalent polarization scalograms of $f$. Figure \ref{fig:twoHyperbolicChirps}a, b, and c represent the three time-scale Stokes parameters $S_1$, $S_2$, $S_3$. Figure \ref{fig:twoHyperbolicChirps}d, e and f give the equivalent representation using the time-scale energy density, and the instantaneous orientation and ellipticity extracted from the ridge. The polarization properties of each chirp are correctly recovered.

These two examples validate the use of the Q-STFT and Q-CWT representations for nonstationary bivariate signals. Time-frequency (resp. time-scale) resolution is governed by the choice of the window (resp. wavelet), so that usual trade-offs apply.

\begin{figure}
	\includegraphics[width=\textwidth]{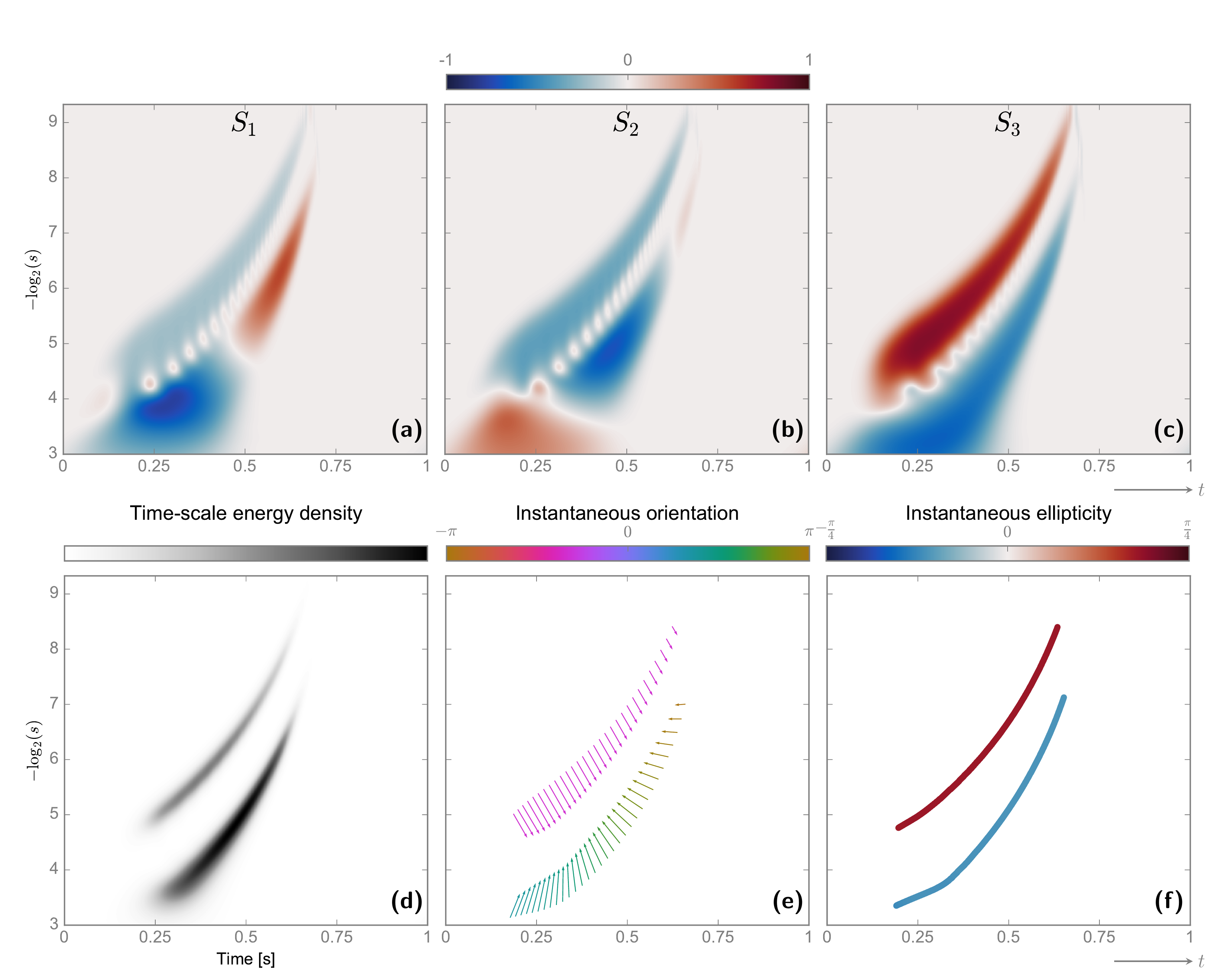}\caption{Sum of two polarized hyperbolic chirps and its polarization scalogram representation. (a), (b) and (c) Stokes parameters in the time-scale plane. (d) Time-scale energy density (e) instantaneous orientation on the ridge (f) instantaneous ellipticity on the ridge. Top chirp shows positive ellipticity, whereas the bottom chirp shows a negative one.}\label{fig:twoHyperbolicChirps}
\end{figure}

\subsection{A real world example}
\begin{figure}
\centering
	\includegraphics[width=0.5\textwidth]{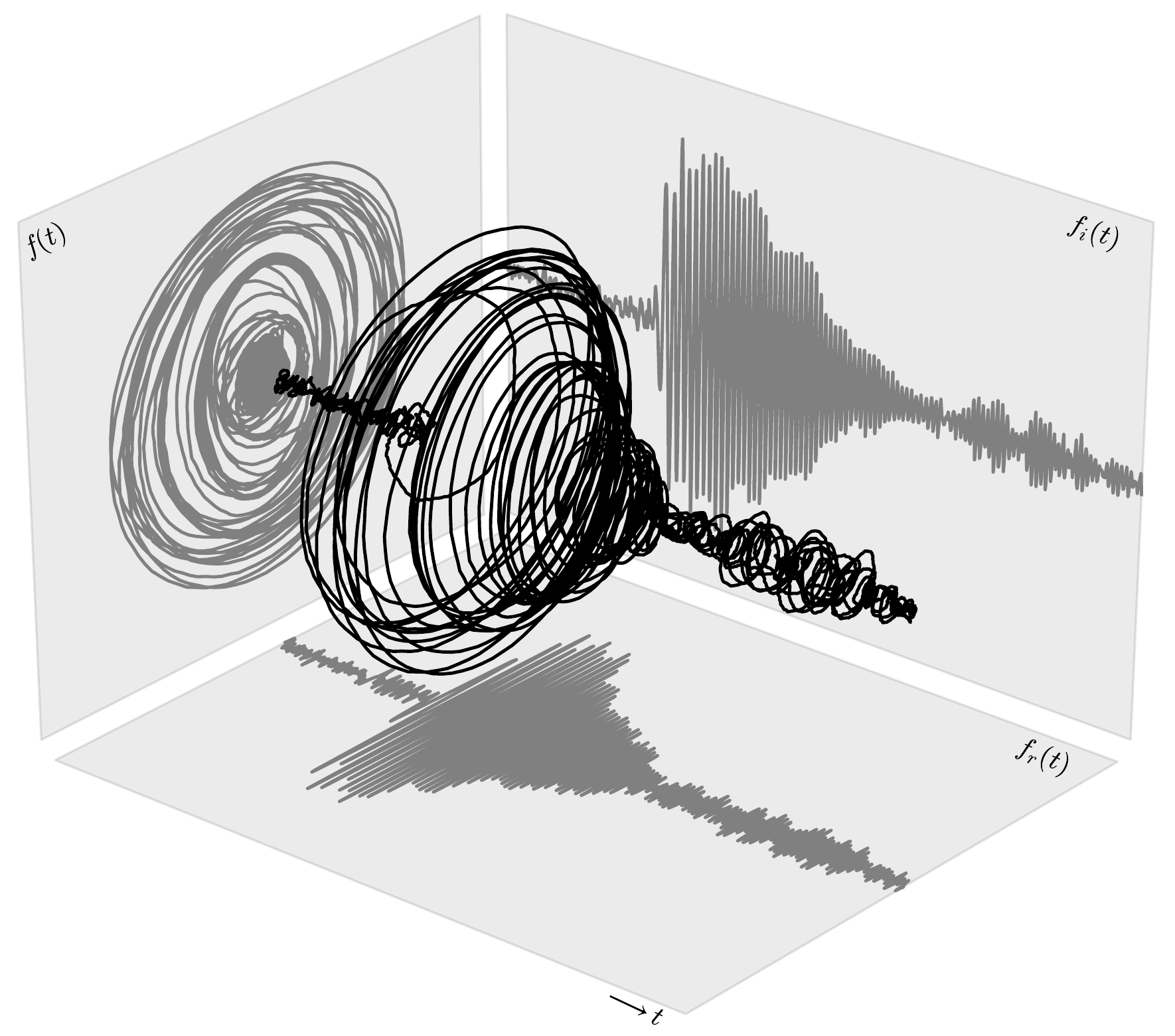}
	\caption{Seismic trace from the 1991 Solomon Islands Earthquake. For clarity, the bivariate signal has been developed in time along the third direction. Projection planes show respectively its complex plane trace, as well as real and imaginary components evolving with time.}\label{fig:SalomonSignal}
\end{figure}

Figure \ref{fig:SalomonSignal} shows a seismic trace of the 1991 Solomon Islands Earthquake. This signal has already been studied by several authors \cite{olhede2003polarization,sykulski2015improper,lilly1995multiwavelet}. Data is available as part of JLab \cite{Lillytoolbox}. It displays the time evolution of the process $f(t) = y(t) + \bmi r(t)$, where $y$ is the vertical component and $r$ is the radial component. The part of the signal which is represented contains $N = 9000$ samples, equispaced by $0.25$ s. . Figure \ref{fig:SalomonSignal} suggests that the signal is on average elliptically polarized, whereas the instantaneous orientation does not appear clearly from the seismic trace. 

The Q-STFT of the signal has been computed using a Hanning window of size $801$ samples, with window spacing equal to $10$ samples. The Q-CWT of the signal has been computed on $200$ scales, and using a Morlet wavelet with $\eta = 5$.

Figure \ref{fig:SalomonBivariateSpectro} and \ref{fig:SalomonBivariateScalo} depict  respectively polarization spectrograms and polarization scalograms of $f$.  In Figure \ref{fig:SalomonBivariateSpectro}d and \ref{fig:SalomonBivariateScalo}d, the ridge has been represented on top of the time-frequency/time-scale energy density. Figure \ref{fig:SalomonBivariateSpectro}e, f and \ref{fig:SalomonBivariateScalo}e, f show the instantaneous orientation and ellipticity on the ridge.

From the ridge, this signal can be described in first approximation as a slow linear chirp in frequency. Moreover, as seen in both descriptions -- Q-STFT and Q-CWT -- the orientation of the signal remains constant in the most energetic part, at around $-100$ degrees. The instantaneous ellipticity is on average equal to $\chi \simeq \pi/5$, confirming the elliptical polarization obtained by visual inspection of figure \ref{fig:SalomonSignal}. However, we also see that the instantaneous ellipticity reaches sporadically $\simeq \pi/4$ (almost circular polarization) and $\simeq 0^+ $ (almost linear polarization), thus revealing more details about the signal.

\begin{figure}
\centering
	\includegraphics[width=\textwidth]{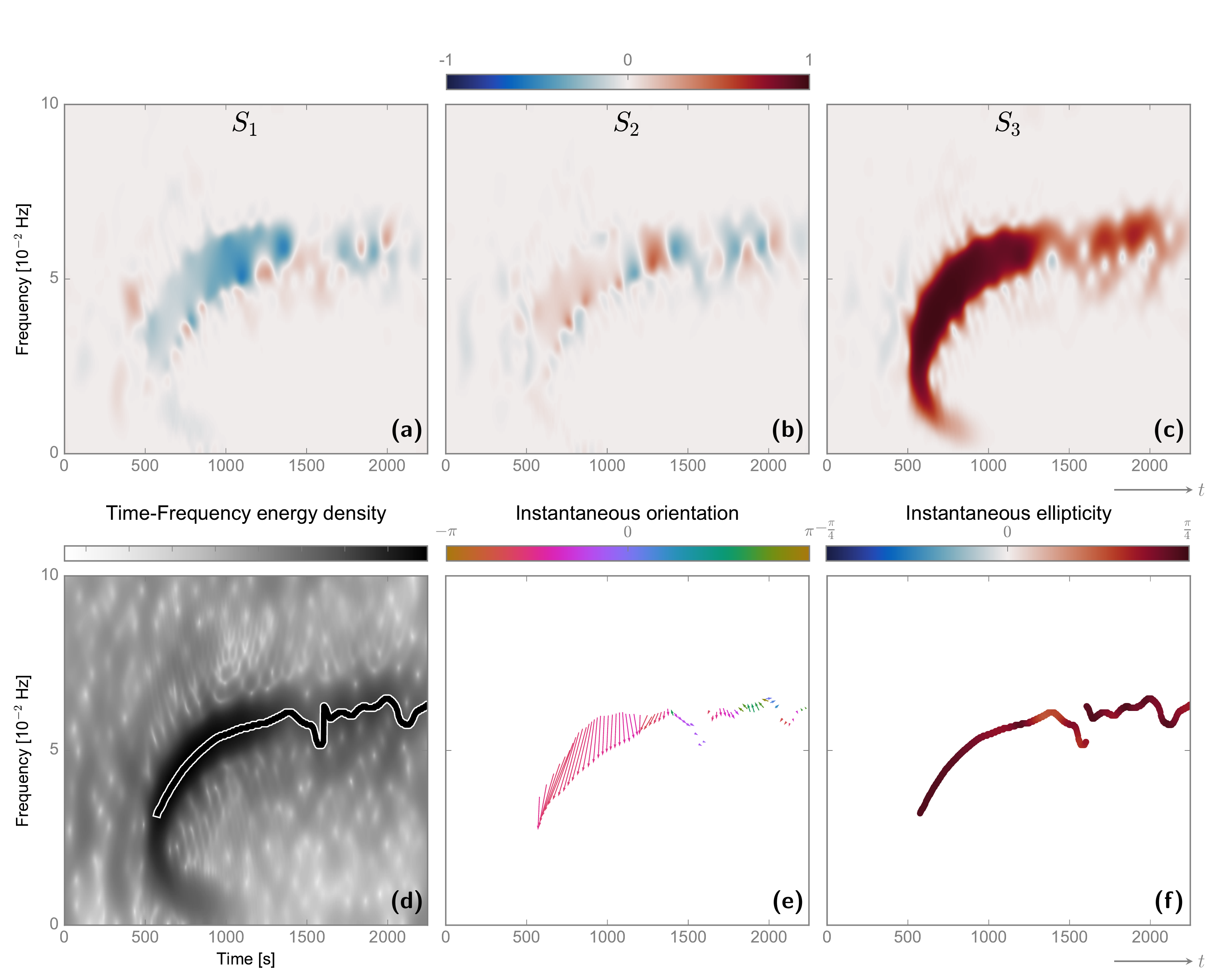}
	\caption{Polarization spectrogram of the Solomon Earthquake dataset (a), (b) and (c) Stokes parameters in the time-frequency plane. (d) Time-frequency energy density (e) instantaneous orientation on the ridge (f) instantaneous ellipticity on the ridge.}\label{fig:SalomonBivariateSpectro}
\end{figure}
\begin{figure}
\centering
	\includegraphics[width=\textwidth]{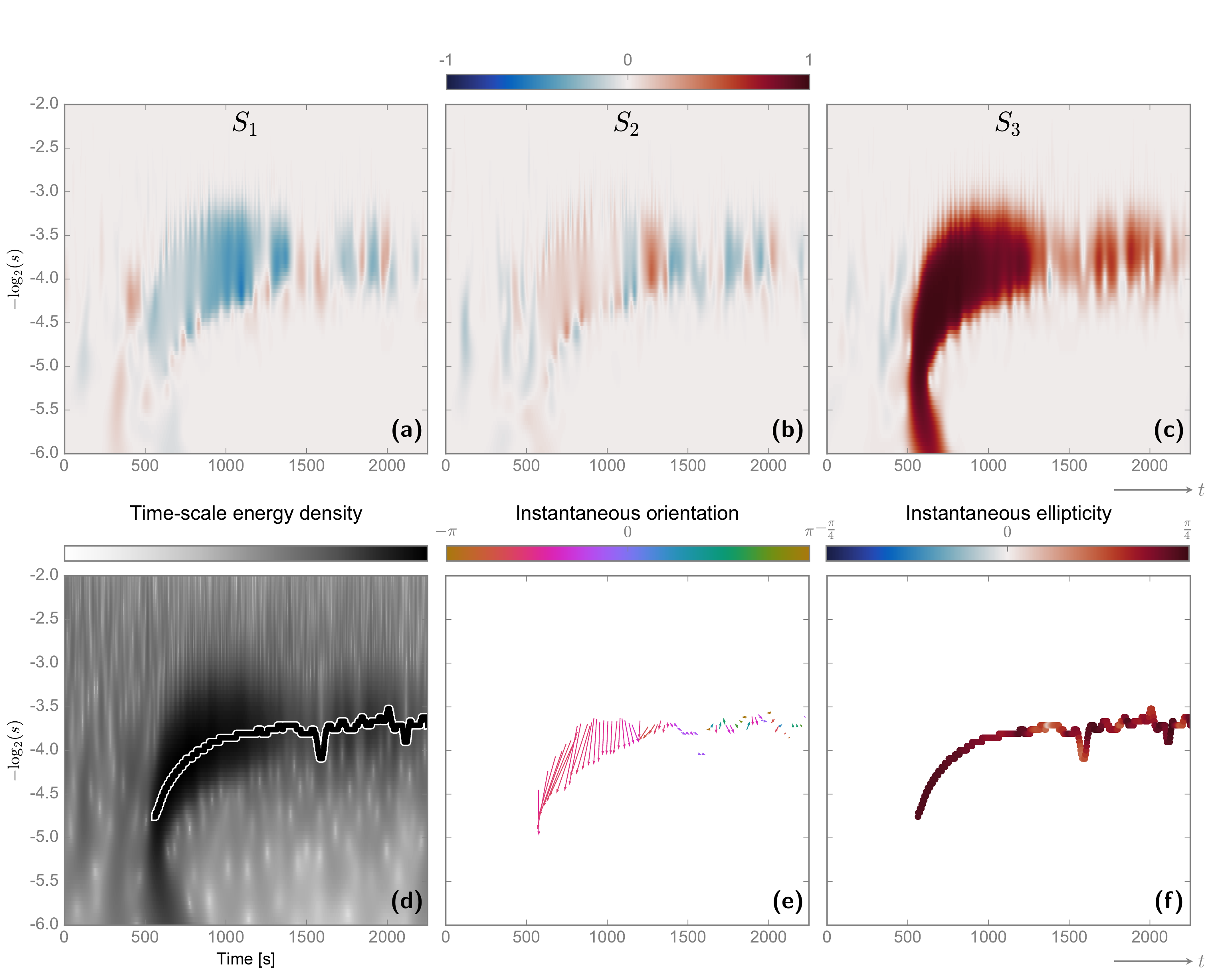}
	\caption{Polarization scalogram of the Solomon Earthquake dataset ((a), (b) and (c) Stokes parameters in the time-scale plane. (d) Time-scale energy density (e) instantaneous orientation on the ridge (f) instantaneous ellipticity on the ridge.}\label{fig:SalomonBivariateScalo}
\end{figure}

\section{Conclusion}


We have proposed a generalization of time-frequency analysis to bivariate signals. Our approach is based on the use of a
Quaternion Fourier Transform (QFT). It appears that, despite the apparent complexity of the quaternion algebra due to non-commutativity, the natural extension of usual definitions leads to a bivariate time-frequency toolbox with nice properties. This new framework interestingly includes the usual univariate Fourier analysis. Heisenberg's uncertainty principle is again supported by Gabor's theorem. The definition of the quaternion embedding, a counterpart of the analytic signal for bivariate signals, yields a natural elliptic description of the instantaneous polarization state thanks to the Euler polar form. As a consequence we have introduced instantaneous Stokes parameters, the relevant physical quantities to describe the polarization state of polarized waves.

Turning to time-frequency representations, we have defined a quaternion short term Fourier transform (Q-STFT) and a quaternion continuous wavelet transform (Q-CWT) as simple generalizations of the usual definitions by using the QFT in place of the usual Fourier transform. This extension permits to show fundamental theorems on the conservation of energy and polarization quantities as well as reconstruction formulas. These theorems permit the definition of spectrograms and scalograms, including the representation of the evolution of the polarization state in the time-frequency plane. These spectrogram and scalogram possess an underlying RKHS structure.

In practice, due to Gabor's theorem, spectrograms and scalograms are never perfectly localized so that one often needs to extract ridges to accurately identify the time-frequency content of the signal. Classical ridge extraction algorithms apply to the present framework.
The application of the proposed toolbox to synthetic as well as real world data has demonstrated the efficiency of the proposed approach. The resulting graphical representations make the time-frequency content of bivariate signals very readable and intelligible. On a practical ground, the numerical implementation remains simple and cheap since it relies on the use of a few fast Fourier transforms. The code will be made available from our websites.

We emphasize the general relevance and efficiency of the proposed approach to analyze a wide class of bivariate signals without any {\em ad hoc} model. This approach is very generic. We believe that it will be useful in many applications where the joint time-frequency analysis of 2 components is required. Moreover this work paves the way to the definition of even more general tools to deal with either bivariate signals in dimension larger than 1 or multivariate signals with values in a D-dimensional space with $D\geq 2$.

\section*{Acknowledgments}

Nicolas Le Bihan's research was supported by the ERA, European Union, through the International Outgoing Fellowship (IOF GeoSToSip 326176) program of the 7th PCRD. This work was partly supported by the CNRS, GDR ISIS, within the SUNSTAR interdisciplinary research program.

\appendix
\section{Proofs}

\subsection{Proof of theorem \ref{theorem:CompletenessSTQFT}}\label{appendix:proofCompletenessSTQFT}
The time-frequency-polarization atoms $g_{u,\xi}$ are of the form $g_{u, \xi}(t) = g(t-u)\exp(\bmj\xi t)$, where $g$ is a real symmetric window. Let us rewrite the Q-STFT coefficients $Sf(u, \xi)$ like
	\begin{align}
		Sf(u,\xi) & = \intinf f(t)g(t-u)\exp(-\bm j \xi t)\dt \nonumber \\
		&= \left(\intinf f(t)g(u-t)\exp[\bm j \xi (u-t)]\mathrm{d}t\right)\exp(-\bm j \xi u) \nonumber \\
		&= (f\ast g_{0,\xi})(u)\exp(-\bm j \xi u)
	\end{align}
	where $g_{0, \xi}(t) = g(t)\exp(\bm j \xi t)$. The QFT of this expression yields
	\begin{align}
	\label{eq:QFT_STFT}
		\intinf Sf(u, \xi)\exp(-\bm j \omega u) \mathrm{d}u 
		&=\intinf (f\ast g_{0,\xi})(u)\exp(-\bm j (\omega + \xi) u)\mathrm{d}u \nonumber\\
		&= \hat{f}(\omega + \xi)\hat{g}_{0,\xi}(\omega + \xi) \nonumber\\
		&= \hat{f}(\omega + \xi)\hat{g}(\omega)
	\end{align}
	thanks to the convolution property (Prop. \ref{proposition:Convolution}) of the QFT. Parseval's formula with respect to $u$ yields
	\begin{equation}
		\intinf Sf(u,\xi)g(t-u)\mathrm{d}u = \frac{1}{2\pi}\intinf \left[\hat{f}(\omega + \xi)\hat{g}(\omega)\right]e^{\bm j t \omega}\overline{\hat{g}(\omega)}\mathrm{d}\omega
	\end{equation}
	Since $g$ is real, its QFT is $\mathbb{C}_{\bm j}$-valued, it commutes with the exponential kernel, \ie $e^{\bm j t \omega}\overline{\hat{g}(\omega)} = \overline{\hat{g}(\omega)}e^{\bm j t \omega}$. 
	We conclude the proof of the inversion formula (\ref{eq:inversionFormulaSTFT}) according to the following lines by using that $\Vert g \Vert = 1$:
	\begin{align}
		\frac{1}{2\pi}\intinf\intinf Sf(u, \xi)g(t-u)e^{\bm j \xi t}\mathrm{d}u\mathrm{d}\xi  &= \frac{1}{4\pi^2}\intinf\intinf \left[\hat{f}(\omega + \xi)\hat{g}(\omega)\right]\overline{\hat{g}(\omega)}e^{\bm j t (\omega+\xi)}\mathrm{d}\omega\mathrm{d}\xi \nonumber \\
		 &= \frac{1}{4\pi^2}\intinf\intinf \hat{f}(\omega + \xi)\vert \hat{g}(\omega)\vert^2e^{\bm j t (\omega+\xi)}\mathrm{d}\omega\mathrm{d}\xi \nonumber\\
		&= \frac{1}{2\pi}\intinf f(t) \vert \hat{g}(\omega)\vert^2 \mathrm{d}\omega  \nonumber\\
		&= f(t)
	\end{align}

	From (\ref{eq:QFT_STFT}) we know that the QFT of $Sf(u, \xi)$ with respect to $u$ is $\hat{f}(\omega + \xi)\hat{g}(\omega)$. Then using the usual Plancherel's formula in $u$ yields
	\begin{align}
		\frac{1}{2\pi}\iint \vert Sf(u, \xi)\vert^2\mathrm{d}u\mathrm{d}\xi &= \frac{1}{2\pi}\int\frac{1}{2\pi}\int \vert \hat{f}(\omega + \xi)\hat{g}(\omega)\vert^2 \mathrm{d}\omega\mathrm{d}\xi \nonumber\\
		&=\frac{1}{2\pi}\int\frac{1}{2\pi}\int \vert \hat{f}(\omega)\hat{g}(\omega-\xi)\vert^2 \mathrm{d}\omega\mathrm{d}\xi \nonumber\\
		&=\frac{1}{2\pi}\int \vert \hat{f}(\omega) \vert^2 \mathrm{d}\omega \nonumber\\
		&= \Vert f \Vert^2.
	\end{align}
	which concludes the proof of the energy conservation property (\ref{eq:energyConservationQSTFT}). The polarization conservation property (\ref{eq:polarConservationQSTFT}) is proven along the same lines using the second Plancherel formula in (\ref{eq:PlancherelQFT}).
\qed

\subsection{Proof of theorem \ref{theorem:CompletenessQCWT}}\label{appendix:proofCompletenessCWT}
	We first prove a preliminary result. The polarization wavelets $\psi$ are $\bbCj$-valued and we use the notation $\psi_s(t) = s^{-1/2}\psi(t/s)$. Let $f_s(u) = Wf(u,s)$ the wavelet coefficients at scale $s$. The QFT with respect to $u$ is
	\begin{align}
		\hat{f}_s(\omega)& = \intinf Wf(u,s)\exp(-\bmj \omega u)\mathrm{d}u \nonumber\\
		& = \intinf\intinf f(t)\overline{\psi_s(t-u)}\exp(-\bmj \omega u)\mathrm{d}u\dt \nonumber\\
		& = \intinf f(t)\left(\intinf\overline{\psi_s(t-u)}\exp(-\bmj \omega u)\mathrm{d}u \right)\dt \nonumber\\
		& = \left(\intinf f(t)\exp(-\bmj\omega t)\dt)\right)\overline{\hat{\psi}_s(\omega)} \nonumber\\
		& = \underbrace{\hat{f}(\omega)}_{\in \bbH}\underbrace{\overline{\hat{\psi}_s(\omega)}}_{\in \bbCj}
	\end{align}
Let us consider the quantity $b(t)$  
	\begin{equation}
		b(t) = \intinf\intinf Wf(u,s)\psi_s(t-u)\mathrm{d}u\frac{\mathrm{d}s}{s^2}
	\end{equation}
	Taking the QFT of $b(t)$ yields
	\begin{align}
		\hat{b}(\omega) &= \intinf \left(\intinf\int_0^{+\infty} Wf(u,s)\psi_s(t-u)\mathrm{d}u\frac{\mathrm{d}s}{s^2}\right)\exp(-\bmj \omega t)\dt \nonumber\\
		& =\intinf \int_0^{+\infty} Wf(u, s)\left(\intinf \psi_s(t-u)\exp(-\bmj \omega t)\dt\right)\mathrm{d}u\frac{\mathrm{d}s}{s^2} \nonumber\\
		&= \intinf \int_0^{+\infty} Wf(u, s) \hat{\psi}_s(\omega)\exp(-\bmj\omega u)\mathrm{d}u\frac{\mathrm{d}s}{s^2} \nonumber\\
		& =\intinf  \int_0^{+\infty} Wf(u, s)\exp(-\bmj\omega u)\hat{\psi}_s(\omega)\mathrm{d}u\frac{\mathrm{d}s}{s^2} \nonumber \\
		& = \int_0^{+\infty}\hat{f}(\omega)\overline{\hat{\psi}_s(\omega)}\hat{\psi}_s(\omega)\frac{\mathrm{d}s}{s^2}
	\end{align}
	so that 
	\begin{equation}
		\hat{b}(\omega) = \hat{f}(\omega)\intinf \frac{\vert \hat{\psi}(\xi)\vert^2}{\xi}\mathrm{d}\xi =  \hat{f}(\omega) C_\psi,
	\end{equation}
	If the admissibility condition is satisfied (if $C_\psi$ is finite), $f_+(t)$ and $C_\psi^{-1}b(t)$ have the same Fourier transforms. This proves the inversion formula (\ref{eq:inverseReconstructionCWT}).
	\begin{equation}
		f(t) = \frac{1}{C_\psi}\intinf \int_0^{+\infty} Wf(u,s)\psi_s(t-u)\mathrm{d}u\frac{\mathrm{d}s}{s^2}.
	\end{equation}
	%
	Turning to the energy conservation property (\ref{eq:energyConservationQCWT}), we write according to the first Plancherel's formula  (\ref{eq:PlancherelQFT})
	\begin{equation}
		\frac{1}{C_\psi}\intinf \int_0^{+\infty} \vert Wf_+(u,s)\vert^2\mathrm{d}u\frac{\mathrm{d}s}{s^2} = \frac{1}{C_\psi}\intinf\int_0^{+\infty}\vert\hat{f}_+(\omega)\overline{\hat{\psi}_s(\omega)}\vert^2 \mathrm{d}\omega\frac{\mathrm{d}s}{s^2}
	\end{equation}
	Simplifiyng this expression yields
	\begin{equation}
		\frac{1}{C_\psi}\intinf\int_0^{+\infty}\vert Wf_+(u,s)\vert^2\mathrm{d}u\frac{\mathrm{d}s}{s^2} = \int_{-\infty}^{+\infty}\vert \hat{f}_+(\omega)\vert^2\mathrm{d}\omega = \Vert f_+ \Vert^2
	\end{equation}
	which proves (\ref{eq:energyConservationQCWT}).
	The conservation of polarization properties (\ref{eq:polarConservationQCWT}) is obtained following the same lines.
	\qed

\section{Stationary phase approximation}

\subsection{Principle}\label{appendix:stationaryPhaseApproximation}
We will need a stationary phase approximation to study the localization of ridges in spectrograms and scalograms.
We briefly recall the stationary phase approximation \cite{dingleAsymptotic}. This argument is based on \cite{delprat1992asymptotic}, adapted to the case of the QFT. 
Let us consider the integral
\begin{equation}
	I = \intinf A(t)e^{\bmj \phi(t)}\dt,
\end{equation}
where $A \in \mathcal{C}^\infty_0(\bbR, \bbH)$, which ensures that $\vert A(t)\vert \rightarrow 0$ as $t \rightarrow \pm \infty$, and $\phi \in C^\infty(\bbR, \bbR)$. We assume moreover that the function $\phi$ is varying much faster than variations of $A$. 
Let $\tau$ be a stationary point of $\phi$ such that is $\phi'(\tau) = 0$. Assume that $\tau$ is unique, otherwise the contributions of all stationnary points must be summed up. Then, if the conditions of a stationary phase approximation are valid, the change of variable
\begin{equation}
	u(t) = \frac{1}{\sqrt{2}}\left( -\bmj \mbox{sign}(\phi''(\tau)) |\phi''(\tau)| \right)^{1/2} (t-\tau)    
\end{equation}
permits to get the following approximation
\begin{equation}
	I\simeq A(\tau) e^{j\phi(\tau)} \int_{u(\bbR)} e^{-u^2} \left( \frac{du}{dt} \right)^{-1} du.
\end{equation}
Then, thanks to usual results on Gaussian integrals one gets 
\begin{equation}\label{eq:appendixPhasestationaryApproximation}
	I \simeq \sqrt{2\pi}\frac{A(\tau)}{\sqrt{\vert \phi''(\tau)\vert }}e^{\bmj\sign{(\phi''(\tau))}\frac{\pi}{4}}e^{\bmj\phi(\tau)}.
\end{equation}

\subsection{Q-STFT asymptotic analysis}\label{appendix:WindowRidges}
For sake of simplicity, we restrict our analysis to points $(u, \xi) \in \Omega \subset \mathcal{P}$ such that the time-frequency-polarization atoms $g_{u, \xi}$ belong to the Hardy space $H^2(\bbR, \bbCj)$, see (\ref{eq:Hardyspace}). This restriction ensures that $g_{u, \xi}$ features positive frequencies only so that $\forall\: (u, \xi) \in \Omega$,
\begin{equation}\label{eq:HardyspaceConditionSpectro}
	Sf(u, \xi) = \frac{1}{2}Sf_+(u, \xi).
\end{equation}
We consider bivariate AM-FM signals $f(t)$ such that (\ref{eq:polarizedAMFMrequirements}) is valid.
The aim of this section is to show that Q-STFT ridges of these signals concentrate on the local instantaneous frequency $\phi'(u)$ at time $u$. 
We start from
\begin{equation}
	Sf(u, \xi) = \frac{1}{2}\ip{f_+, g_{u, \xi}} = \frac{1}{2}\intinf f_+(t)g(t-u)e^{-\bmj \xi t}\dt.
\end{equation}
Recall that $f_+(t) = a(t)\exp(-\bmk \chi(t))\exp(\bmj \phi(t))$ so that
\begin{equation}\label{eq:Sfexplicit}
	Sf(u, \xi) = \frac{1}{2}\intinf a(t)e^{-\bmk \chi(t)}g(t-u)e^{\bmj (\phi(t)-\xi t)}\dt.
\end{equation}
This expression is an oscillatory integral that can be approximated using a stationary phase argument, see \ref{appendix:stationaryPhaseApproximation} above. Let $\Phi_{\xi}(t) = \phi(t) - \xi t$, and denote $\tau = \tau(\xi)$ a stationary phase point such that $\Phi'_{\xi}(\tau) = 0$. We assume that $\tau(\xi)$ is unique for each $\xi$ and that $\Phi_{\xi}''(\tau) \neq 0$ for simplicity\footnote{If there are multiple stationary points, one must sum their contributions. Also, if $\Phi_{\xi}''(\tau) = 0$, then we search the smallest $k \geq 2$ such that $\Phi_{\xi}^{(k)}(\tau) \neq 0$. Formula follow by straightforward adjustment.}. The stationary phase approximation reads
\begin{equation}
		Sf(u, \xi) \simeq \sqrt{\frac{\pi}{2}}\frac{a(\tau)e^{-\bmk \chi(\tau)}g(\tau-u)}{\sqrt{\vert \phi''(\tau)\vert }}e^{\sign{(\phi''(\tau))}\bmj\frac{\pi}{4}}e^{\bmj(\phi(\tau)-\xi \tau)}
\end{equation}
which can be rewritten as
\begin{equation}\label{eq:Sfasymptotic}
	Sf(u, \xi) \simeq  f_+(\tau)\cdot \sqrt{\frac{\pi}{2}}\frac{g(\tau-u)}{\sqrt{\vert \phi''(\tau)\vert }}e^{\sign{(\phi''(\tau))}\bmj\frac{\pi}{4}}e^{-\bmj\xi \tau}.
\end{equation}
The last equation shows that the approximation involves the computation of the quaternion embedding at stationary points only. This is a straightforward generalization of classical results on the analytic signal \cite{delprat1992asymptotic}.
The \emph{ridge} of the transform is the set of points $(u, \xi) \in \Omega$ such that $\tau(\xi) = u$. On the ridge, one has  
\begin{equation}
	\xi_{\mathrm{ridge}}(u) = \phi'(u),
\end{equation}
which corresponds to the instantaneous frequency of an AM-FM signal. The restriction of the Q-STFT to the ridge then is
\begin{equation}
	Sf(u, \xi_{\mathrm{ridge}}(u)) \simeq  f_+(u)\cdot \sqrt{\frac{\pi}{2}}\frac{g(0)}{\sqrt{\vert \phi''(u)\vert }}e^{\sign{(\phi''(u))}\bmj\frac{\pi}{4}}e^{-\bmj\xi u}.
\end{equation}
\subsection{Q-CWT asymptotic analysis}\label{appendix:WaveletRidges}

Let $\psi \in H^2(\bbR, \bbCj)$ be an admissible wavelet, and let $f \in L^2(\bbR, \bbH)$. One has
\begin{equation}
	Wf(u, s) = \frac{1}{2}Wf_+(u, s). 
\end{equation}
In polar form, $\psi(t) = \vert \psi(t)\vert\exp(\bmj \phi_\psi(t))$ and the wavelet transform reads
\begin{equation}
	Wf(u, s) = \frac{1}{2\sqrt{s}}\intinf f_+(t)\overline{\psi\left(\frac{t-u}{s}\right)}\dt.
\end{equation}
Using the Euler polar form of $f_+(t)$ yields
\begin{equation}
	Wf(u, s) = \frac{1}{2\sqrt{s}}\intinf a(t)e^{-\bmk \chi(t)}\left\vert \psi\left(\frac{t-u}{s}\right)\right \vert e^{\bmj (\phi(t) - \phi_{\psi}[(t-u)/s])}\dt.
\end{equation}
This is an oscillatory integral, which is approximated using formula (\ref{eq:appendixPhasestationaryApproximation}). For $(u, s) \in \mathcal{P}$ we assume that $\tau=\tau(u,s)$ is the unique stationary point of $\Phi_{u, s}(t) = \phi(t) - \phi_{\psi}[(t-u)/s]$ such that $\Phi_{u, s}'(\tau) = 0$ and $\Phi''_{u, s}(\tau) \neq 0$. Then
\begin{equation}
	Wf(u, s) \simeq  \sqrt{\frac{\pi}{2s}}f_+(\tau)\frac{1}{\sqrt{\vert \Phi''_{u, s}(\tau)\vert }}\overline{\psi\left(\frac{\tau-u}{s}\right)}e^{\sign(\Phi''_{u, s}(\tau))\bmj\frac{\pi}{4}}
\end{equation}
The ridge of the transform is the set of points $(u, s) \in \mathcal{P}$ such that $\tau(u,s) = u$. On the ridge, one has 
\begin{equation}
	s_{\mathrm{ridge}}(u) = \frac{\phi'_\psi(0)}{\phi'(u)},
\end{equation}
which corresponds to the instantaneous frequency of the signal $f$. The restriction of the wavelet transform to the ridge is:
\begin{equation}\label{eq:QCWTRidgeExpression}
		Wf(u, s_{\mathrm{ridge}}(u)) \simeq  \sqrt{\frac{\pi}{2s_{\mathcal{R}}(u)}}f_+(u)\frac{1}{\sqrt{\vert \Phi''_{u, s_{\mathcal{R}}(u)}(u)\vert }}\overline{\psi(0)}e^{\sign(\Phi''_{u, s_{\mathcal{R}}(u)}(u))\bmj\frac{\pi}{4}}.
\end{equation}
Here we mostly consider wavelet modulated windows $\psi(t) = g(t)\exp(\bmj\eta t)$, where $g$ is a real symmetric window and $\eta >0$ is the central frequency such that $\psi$ belongs to $H^2(\bbR, \bbCj)$. In this case, equation (\ref{eq:QCWTRidgeExpression}) can be rewritten as
\begin{equation}
	Wf(u, s_{\mathrm{ridge}}(u)) \simeq  f_+(u)\cdot \sqrt{\frac{\pi}{2s_{\mathrm{ridge}}(u)}}\frac{g(0)}{\sqrt{\vert \phi''(u)\vert }}e^{\sign(\phi''(u))\bmj\frac{\pi}{4}}.
\end{equation}

\section{Efficient implementation of the QFT using complex FFTs}
The discrete QFT can be computed easily using the standard complex FFT algorithm. In the general case where $f$ is $\bbH$-valued, one can write the function $f$ in the following manner
\begin{equation}
	f(t) = f_s(t) + \bm i f_p(t)
\end{equation}
where $f_s$ and $f_p$ are called \emph{simplex} and \emph{perplex} parts, respectively, and are both $\mathbb{C}_{\bm j}$-valued functions. We note that this decomposition highly resemble the Cayley-Dickson decomposition. Similarly we can write the symplectic decomposition of the QFT of $f$, such that
\begin{equation}
	\hat{f}(\omega) = \hat{f}_s(\omega) + \bm i \hat{f}_p(\omega),
\end{equation}
with $\hat{f}_s(\omega)$, $\hat{f}_p(\omega)$ both $\bbCj$-valued. Therefore it is sufficient to compute two FFTs: one on the simplex part and the other one on the perplex part. The procedure for the inverse QFT is similar. Thus the complexity of the discrete QFT algorithm is the same as that of the standard FFT algorithm, that is $\mathcal{O}(N\log N)$. 

\bibliography{ACHA.bib}

\end{document}